\documentclass{entcs}
\usepackage{entcsmacro}
\usepackage{graphicx}
\usepackage{wrapfig}
\usepackage[all]{xy}
\usepackage{latexsym}
\usepackage{amssymb}
\usepackage{amsmath}
\usepackage{etex}
\usepackage{comment}
\usepackage{bussproofs}
\usepackage{braketmod}
\usepackage{algorithm}
\usepackage[noend]{algpseudocode}

\pagestyle{headings}



\newcommand{\Set}{\mathbf{Set}}

\newcommand{\BRel}{\mathbf{BRel}}

\newcommand{\SRel}{\mathbf{SRel}}
\newcommand{\Meas}{\mathbf{Meas}}

\newcommand{\wCPO}{\omega\mathbf{CPO}}



\newcommand{\opposite}[1]{{{#1}}^{\mathrm{o\!p}}}
\newcommand{\inverse}[1]{{#1}^{\hspace{-0.15em} - \hspace{-0.1em}1}\hspace{-0.1em}}


\newcommand{\supp}{\mathop{\mathrm{supp}}}

\newcommand{\lrangle}[1]{\langle #1\rangle}
\newcommand{\interpret}[1]{{[\![ {#1} ]\!]}}

\begin{document}
\begin{frontmatter}
  \title{Approximate Relational Hoare Logic for Continuous Random Samplings}
	\author{Tetsuya Sato\thanksref{myemail}}
  \address{Research Institute for Mathematical Sciences, Kyoto University, Kyoto, 606-8502, Japan} 
	\thanks[myemail]{Email:\href{mailto:satoutet@kurims.kyoto-u.ac.jp}
		{\texttt{\normalshape satoutet@kurims.kyoto-u.ac.jp}}} 
\begin{abstract}
Approximate relational Hoare logic (apRHL) is a logic for formal verification of the differential privacy of databases written in the programming language pWHILE.
Strictly speaking, however, this logic deals only with discrete random samplings.
In this paper, we define the graded relational lifting of the subprobabilistic variant of Giry monad, which described differential privacy.
We extend the logic apRHL with this graded lifting to deal with continuous random samplings.
We give a generic method to give proof rules of apRHL for continuous random samplings.
\end{abstract}
\begin{keyword}
Differential privacy,
Giry monad,
graded monad,
relational lifting,
semantics,
\end{keyword}
\end{frontmatter}
\section{Introduction}
Differential privacy is a \emph{definition} of privacy of \emph{randomized} databases proposed by Dwork, McSherry, Nissim and Smith \cite{DworkMcSherryNissimSmith2006}.
A randomized database satisfies $\varepsilon$-differential privacy (written   $\varepsilon$-differentially private) if for any two adjacent data, the difference of their output probability distributions is bounded by the privacy strength $\varepsilon$.
Differential privacy guarantees high secrecy against database attacks regardless of the attackers' background knowledge, and it has the composition laws, with which we can calculate the privacy strength of a composite database from the privacy strengths of its components.

\emph{Approximate relational Hoare logic} (apRHL) \cite{Barthe:2012:PRR:2103656.2103670,olmedo2014approximate} is a probabilistic variant of the \emph{relational Hoare logic} \cite{Benton2004export:67345} for formal verification of the differential privacy of databases written in the programming language pWHILE.
In the logic apRHL, a parametric relational lifting, which relate probability distributions, play a central role to describe differential privacy in the framework of verification.
This parametric lifting is an extension of the relational lifting \cite[Section 3]{hughesjacobs} that captures probabilistic bisimilarity of Markov chains \cite{LarsenSkou1991} (see also \cite[lemma 4]{deVink1999271}).
The concept of differential privacy is described in the category of binary relation and mappings between them, and verified by the logic apRHL.

Strictly speaking, however, apRHL deals only with random samplings of \emph{discrete} distributions, while the algorithms in many actual studies for differential privacy are modelled with \emph{continuous} distributions, such as, the Laplacian distributions over real line.
Therefore apRHL is desired to be extended to deal with random continuous samplings.
\subsection{Contributions}
Main contributions of this paper are the following two points:
\begin{itemize}
\item We define the graded relational lifting of sub-Giry monad describing differential privacy for continuous random samplings.
\item We extend the logic apRHL \cite{Barthe:2012:PRR:2103656.2103670,olmedo2014approximate} for continuous random samplings (we name \emph{continuous apRHL}) .
\end{itemize}
This graded relational lifting is developed
without witness distributions of probabilistic coupling, and hence is constructed in a different way from the coupling-based parametric lifting of relations given in the studies of apRHL \cite{2016arXiv160105047B,Barthe:2012:PRR:2103656.2103670,olmedo2014approximate}.

In the continuous apRHL, we mainly extend the proof rules for relation compositions and the frame rule.
We also develop a generic method to construct proof rules for random samplings.
By importing the new rules added to apRHL+ in \cite{2016arXiv160105047B},
we give a formal proof of the differential privacy of the \emph{above-threshold algorithm} for real-valued queries \cite[Section 3.6]{DworkRothTCS-042}.
\subsection{Preliminaries}
We denote by $\Meas$ the category of measurable spaces and measurable functions between them and denote by $\Set$ the category of all sets and functions.
The category $\Meas$ is complete and cocomplete, and the forgetful functor
$U \colon \Meas \to \Set$ preserves products and coproducts.
We also denote by $\wCPO_\bot$ of the cateory of $\omega$-complete partial orders with the least element and continuous functions.
\paragraph{A Category of Relations between Measurable Spaces}
We introduce the category $\BRel(\Meas)$ of binary relations between measurable spaces as follows:
\begin{itemize}
\item An object is a triple $(X,Y,\Phi)$ consisting of measurable spaces $X$ and $Y$ and a relation $R$ between $X$ and $Y$ (i.e. $R \subseteq UX \times UY$).
We remark that $R$ does not need to be a measurable subset of the product space $X \times Y$.
\item An arrow $(f,g) \colon (X,Y,\Phi) \to (X',Y',\Phi')$ is a pair of measurable functions $f \colon X \to X'$ and $g \colon Y \to Y'$ such that $(Uf \times Ug)(\Phi) \subseteq \Phi'$.
\end{itemize}
When we write an object $(X,Y,\Phi)$ in $\BRel(\Meas)$, we omit writing the underlying spaces $X$ and $Y$ if they are obvious from the context.
We write $p$ for the forgetful functor $p \colon \BRel(\Meas) \to \Meas \times \Meas$ which extracting underlying spaces: $(X,Y,\Phi) \mapsto (X,Y)$.
We call an endofunctor $F$ on $\BRel(\Meas)$ a \emph{relational lifting} of an endofunctor $E$ on $\Meas$ if $(E \times E)p = pF$.
\paragraph{The Sub-Giry Monad}\label{sec:subgiry}
The Giry monad on $\Meas$ is
introduced in \cite{Giry1982} to give a categorical approach to probability theory; each arrow $X \to Y$ in the Kleisli category of the Giry monad bijectively corresponds to a probabilistic transition from $X$ to $Y$, and the Chapman-Kolmogorov equation corresponds to the associativity law of the Giry monad.

We recall the sub-probabilistic variant of the Giry monad, which we call the \emph{sub-Giry monad} (see also \cite[Section 4]{Panangaden1999171}):
\begin{itemize}
	\item For any measurable space $(X,\Sigma_X)$,
		the measurable space $(\mathcal{G}X,\Sigma_{\mathcal{G}X})$ is defined as follows:
		the underlying set $\mathcal{G}X$ is the set of subprobability measures over $X$,
		and the $\sigma$-algebra $\Sigma_{\mathcal{G}X}$ is the coarsest one
		that makes the evaluation function $\mathrm{ev}_A \colon \mathcal{G}X \to [0,1]$ (mapping $\nu$ to $\nu(A)$) measurable for each $A \in \Sigma_X$.
	\item For each $f \colon X \to Y$ in $\Meas$, $\mathcal{G}f \colon \mathcal{G}X \to \mathcal{G}Y$ is defined by
		$(\mathcal{G}f)(\nu) = \nu(\inverse{f}(-))$.
	\item The unit $\eta$ is defined by $\eta_X (x) = \delta_{x}$,
		where $\delta_{x}$ is the \emph{Dirac measure} centred on $x$. 
	\item The multiplication $\mu$ is defined by $\mu_X(\Xi)(A) = \int_{\mathcal{G}X} \mathrm{ev}_A~d(\Xi)$.
		The Kleisli lifting
		of $f \colon X \to \mathcal{G}Y$
		is given by $f^\sharp(\nu)(A) = \int_{X} f({-})(A)~d\nu$ ($\nu \in \mathcal{G}X$).
\end{itemize}
The monad $\mathcal{G}$ is commutative strong with respect to the cartesian product in $\Meas$.
The strength $\mathrm{st}_{-,=} \colon ({-})\times \mathcal{G}({=}) \Rightarrow \mathcal{G}({-} \times {=})$ is given by the product measure $\mathrm{st}_{X,Y}(x,\nu) = \delta_x \otimes \nu$.
The commutativity of $\mathcal{G}$ is given from the Fubini theorem.
The double strength $\mathrm{dst}_{-,=} \colon \mathcal{G}({-})\times \mathcal{G}({=}) \Rightarrow \mathcal{G}({-} \times {=})$ is given by $\mathrm{dst}_{X,Y}(\nu_1,\nu_2) = \nu_1 \otimes \nu_2$.

The Kleisli category $\Meas_\mathcal{G}$ is often called the category $\mathbf{SRel}$ of \emph{stochastic relations} \cite[Section 3]{Panangaden1999171}.
The category $\mathbf{SRel}$ is $\wCPO_\bot$-enriched (with respect to the cartesian monoidal structure) with the following pointwise order:
\[
f \sqsubseteq g \iff \forall{x \in X, B \in \Sigma_Y}. f(x)(B) \leq g(x)(B)
\quad (f,g \colon X \to Y \text{ in } \SRel).
\]
The \emph{least upper bound} $\sup_{n\in\mathbb{N}} f_n$ of any $\omega$-chain $f_0 \sqsubseteq f_1 \sqsubseteq \cdots \sqsubseteq f_n \sqsubseteq \cdots$ is given by $(\sup_{n} f_n)(x)(B) = \sup_{n} (f_n(x)(B))$.
The \emph{least function} of each $\mathbf{SRel}(X,Y)$ (written $\bot_{X,Y}$) is the constant function of the null-measure over $Y$.
The \emph{continuity} of composition is obtained from the following two facts:
\begin{itemize}
\item
From the definition of Lebesgue integral, for any $\omega$-chain $\{\nu_n\}$ of subprobability measures over $X$, 
$\int_X f ~d(\sup_n \nu_n) = \sup_n \int_X f ~d\nu_n$ holds.
\item From the monotone convergence theorem, we have $\int_X \sup_n f_n ~d\nu = \sup_n \int_X f_n ~d\nu$.
\end{itemize}
This enrichment is equivalent to the partially additive structure on $\SRel$
\cite[Section 5]{Panangaden1999171}:
For any $\omega$-chain
$\{f_n\}_{n\in\mathbb{N}}$ of $f_n \colon X \to Y$ in $\SRel$,
we have the summable sequence $\{g_n\}_{n}$ where $g_0 = f_0$ and $g_{n+1} = f_{n+1} - f_n$.
Conversely, for any summable sequence $\{g_n\}_{n\in\mathbb{N}}$, the functions $f_n = \sum_{k=0}^{n} g_n$ form an $\omega$-chain.
\paragraph{Differential privacy}
Throughout this paper, we define the approximate differential privacy as follows:
\begin{definition}[{\cite[Definition 2.4]{DworkRothTCS-042}}, Modified]
A measurable function $c \colon \mathbb{R}^m \to \mathcal{G}(\mathbb{R}^n)$
is $(\varepsilon,\delta)$-differentially private if $c(x)(A) \leq \exp(\varepsilon)c(y)(A) + \delta$ holds for any $||x - y||_1 \leq 1$ and $A \in \Sigma_{\mathbb{R}^n}$, where $|| \cdot ||_1$ is $1$-norm of the Euclidean space $\mathbb{R}^m$.
\end{definition}
What we modify from the original definition \cite[Definition 2.4]{DworkRothTCS-042} is the domain and codomain of $c$; we replace the domain from $\mathbb{N}$ to $\mathbb{R}$, and replace the codomain from a discrete probability space to $\mathcal{G}(\mathbb{R}^n)$.
We apply this definition to the interpretation of pWHILE programs.
The input and output spaces can be other spaces:
in section \ref{sec:example}  we consider the \emph{above-threshold algorithm} $\mathtt{Above}$ whose output space is $\mathbb{Z}$.
The above modification is essential in describing and verifying the differential privacy of this algorithm because it takes a sample from Laplace distribution over \emph{real line}.
\section{A Graded Monad for Differential Privacy}
The composition law of differential privacy plays crucial role to in the compositional verification of the differential privacy of database programs.
Barthe, K\"{o}pf, Olmedo, and Zanella-B{\'e}guelin constructed a \emph{parametric relational lifting} describing differential privacy, and developed a framework for compositional verification of differential privacy \cite{Barthe:2012:PRR:2103656.2103670}.

Following this relational approach, we construct the parametric relational lifting of Giry monad to describe differential privacy for continuous random samplings.
This lifting forms a graded monad on the category $\BRel(\Meas)$ in the sense of  \cite{Katsumata2014PEM}.
The axioms of graded monad correspond to the (sequential) composition law of differential privacy.
\subsection{Graded Monads}
\begin{definition}\cite[Definition 2.2-bis]{Katsumata2014PEM}
Let $\mathbb{C}$ be a category, and $(M,\cdot,1,{\preceq})$ be a \emph{preordered} monoid.
An $M$-graded (or $M$-parametric effect) monad on $\mathbb{C}$ consists of
\begin{itemize}
	\item a collection $\{T_{e}\}_{e \in M}$ of endofunctors on $\mathbb{C}$,
	\item a natural transformation $\eta \colon \mathrm{Id} \Rightarrow T_1$,
	\item a collection $\{\mu^{e_1,e_2}\}_{e_1, e_2 \in M}$ of natural transformations $\mu^{e_1,e_2} \colon T_{e_1}T_{e_2} \Rightarrow T_{e_1 e_2}$,
	\item a collection $\{{\sqsubseteq}^{e_1, e_2} \}_{e_1 \preceq  e_2}$ of natural transformations $\sqsubseteq^{e_1, e_2} \colon T_{e_1} \Rightarrow T_{e_2}$
\end{itemize}
satisfying  
\begin{itemize}
	\item $\mu^{e,1} \circ T_e \eta = \mu^{1,e} \circ \eta_{T_e} = \mathrm{Id}_{T_e}$ for any $e \in M$,
	\item $\mu^{(e_1 e_2),e_3} \circ \mu^{e_1,e_2}{T_{e_3}}  = \mu^{e_1,(e_2, e_3)} \circ T_{e_1} \mu^{e_2,e_3}$ for all $e_1, e_2, e_3 \in M$,
	\item ${\sqsubseteq}^{e, e} = \mathrm{Id}_{T_{e}}$ for any $e$ and ${\sqsubseteq}^{e_2, e_3} \circ {\sqsubseteq}^{e_1, e_2} = {\sqsubseteq}^{e_1, e_3}$ whenever $e_1 \preceq e_2 \preceq e_3$,
	\item $\sqsubseteq^{(e_1 e_2), (e_3 e_4)} \circ \mu^{e_1, e_2} = \mu^{e_3, e_4} \circ (\sqsubseteq^{e_1, e_3} \ast {\sqsubseteq}^{e_2, e_4})$ whenever $e_1 \preceq e_3$ and $e_2 \preceq e_4$.
\end{itemize}
\end{definition}
An $M$-graded monad $(\{T_{e}\}_{e \in M}, \eta, \mu^{e_1,e_2}, {\sqsubseteq}^{e_1, e_2})$ on $\mathbb{C}$ is called an $M$-graded lifting of monad $(T,\eta^T,\mu^T)$ on $\mathbb{D}$ along $U \colon \mathbb{C} \to \mathbb{D}$ if $U{T_{e}} = TU$, $U(\eta) = \eta^T U$, $U(\mu^{e_1,e_2}) = \mu^T U$, and $U({\sqsubseteq}^{e_1, e_2}) = \mathrm{id}_T$.
\subsection{A Graded Relational Lifting of Giry Monad for Differential Privacy}\label{sec:gradedlifting}
Let $M$ be the cartesian product of the monoids
$([1,\infty), \times, 1 )$ and $([0,\infty), + , 0)$ equipped with the product order of numerical orders.
For each $(\gamma,\delta) \in M$, we define the following mapping of $\BRel(\Meas)$-objects by
\[
\mathcal{G}^{(\gamma,\delta)} \Phi
	= \SetBraket{
		(\nu_1,\nu_2) \in \mathcal{G}X \times \mathcal{G}Y
		| \begin{array}{l@{}}
			\forall{A \in \Sigma_X, B \in \Sigma_Y}.\\
			\Phi(A)\subseteq B \implies \nu_1(A) \leq \gamma \nu_2(B) + \delta
			\end{array}
	}.
\]
\begin{proposition}
$\{\mathcal{G}^{(\gamma,\delta)}\}_{(\gamma,\delta) \in M}$ forms an $M$-graded lifting of the monad $(\mathcal{G}\times\mathcal{G},\eta\times\eta,\mu\times\mu)$ along the forgetful functor $p \colon \BRel(\Meas) \to \Meas\times\Meas$.
\end{proposition}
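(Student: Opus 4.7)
The plan is to verify the data and axioms of an $M$-graded lifting one component at a time. Because each relation $\mathcal{G}^{(\gamma,\delta)}\Phi$ is carved out as a subset of $\mathcal{G}X \times \mathcal{G}Y$ and the proposed structure morphisms are all just $p$-preimages of $\mathrm{id}$, $\eta\times\eta$, $\mu\times\mu$, and $\mathrm{id}$, the naturality, associativity, unit, and inclusion-compatibility axioms reduce to the corresponding equations for the underlying monad $(\mathcal{G}\times\mathcal{G}, \eta\times\eta, \mu\times\mu)$ on $\Meas\times\Meas$. So the real content of the proposition is the following four well-definedness checks: (a) each $\mathcal{G}^{(\gamma,\delta)}$ is a functor on $\BRel(\Meas)$, acting as $(f,g)\mapsto(\mathcal{G}f,\mathcal{G}g)$ on arrows; (b) for $(x,y)\in\Phi$, $(\delta_x,\delta_y)\in\mathcal{G}^{(1,0)}\Phi$; (c) the pair $(\mu,\mu)$ sends $\mathcal{G}^{(\gamma_1,\delta_1)}(\mathcal{G}^{(\gamma_2,\delta_2)}\Phi)$ into $\mathcal{G}^{(\gamma_1\gamma_2,\delta_1+\delta_2)}\Phi$; and (d) $(\mathrm{id},\mathrm{id})$ witnesses the inclusion $\mathcal{G}^{(\gamma_1,\delta_1)}\Phi\subseteq\mathcal{G}^{(\gamma_2,\delta_2)}\Phi$ whenever $\gamma_1\leq\gamma_2$ and $\delta_1\leq\delta_2$.

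Items (a), (b), and (d) are routine. For (a), a $\BRel(\Meas)$-morphism $(f,g)\colon\Phi\to\Phi'$ satisfies $\Phi(f^{-1}A')\subseteq g^{-1}B'$ whenever $\Phi'(A')\subseteq B'$, so the defining inequality for $\mathcal{G}^{(\gamma,\delta)}\Phi$ applied to the preimage pair immediately gives the inequality required for $\mathcal{G}^{(\gamma,\delta)}\Phi'$ applied to $(A',B')$. For (b), given $(x,y)\in\Phi$ and $\Phi(A)\subseteq B$, split on whether $x\in A$: if so then $y\in B$ and $\delta_x(A)=\delta_y(B)=1$; if not then $\delta_x(A)=0$. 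Either way $\delta_x(A)\leq\delta_y(B)+0$. For (d), the right-hand side $\gamma\nu_2(B)+\delta$ of the defining inequality is non-decreasing in $(\gamma,\delta)\in[1,\infty)\times[0,\infty)$.

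The main obstacle is (c). Fix $(\Xi_1,\Xi_2)\in\mathcal{G}^{(\gamma_1,\delta_1)}(\mathcal{G}^{(\gamma_2,\delta_2)}\Phi)$ and $A\in\Sigma_X$, $B\in\Sigma_Y$ with $\Phi(A)\subseteq B$. Combining the definition of $\mu$ with the layer-cake identity I write $\mu\Xi_1(A)=\int_0^1\Xi_1(\mathcal{A}_t)\,dt$ where $\mathcal{A}_t=\{\nu_1:\nu_1(A)>t\}$, and introduce the companion $\mathcal{B}_t=\{\nu_2:\gamma_2\nu_2(B)+\delta_2>t\}$. A direct check from the definition of the inner relation shows $\mathcal{G}^{(\gamma_2,\delta_2)}\Phi(\mathcal{A}_t)\subseteq\mathcal{B}_t$, so applying the outer relation gives the pointwise bound $\Xi_1(\mathcal{A}_t)\leq\gamma_1\Xi_2(\mathcal{B}_t)+\delta_1$. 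The tight point --- and the reason the direct-product grading $(\gamma_1\gamma_2,\delta_1+\delta_2)$ suffices, rather than the cruder $(\gamma_1\gamma_2,\delta_1+\gamma_1\delta_2)$ that a single integration against $\min(1,\gamma_2\nu_2(B)+\delta_2)$ produces --- is to split the $t$-integral at $\delta_2$. (The case $\delta_2\geq1$ is vacuous since then $\mathcal{G}^{(\gamma_2,\delta_2)}\Phi$ is the full product $\mathcal{G}X\times\mathcal{G}Y$.) On $[0,\delta_2]$ I use the trivial bound $\Xi_1(\mathcal{A}_t)\leq1$, contributing at most $\delta_2$. On $[\delta_2,1]$ I substitute $s=(t-\delta_2)/\gamma_2$ so that $\Xi_2(\mathcal{B}_t)=\Xi_2(\{\nu_2(B)>s\})$, extend the $s$-integral to $[0,\infty)$, and recognise it as $\mu\Xi_2(B)$ by layer cake; this contributes at most $\gamma_1\gamma_2\mu\Xi_2(B)+\delta_1(1-\delta_2)$. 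Summing gives $\mu\Xi_1(A)\leq\gamma_1\gamma_2\mu\Xi_2(B)+\delta_1+\delta_2-\delta_1\delta_2\leq\gamma_1\gamma_2\mu\Xi_2(B)+\delta_1+\delta_2$, as required.

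With (a)--(d) in hand, naturality of $\eta$, $\mu^{e_1,e_2}$, and $\sqsubseteq^{e_1,e_2}$, the unit and associativity laws for $\mu^{e_1,e_2}$, the telescoping law $\sqsubseteq^{e_2,e_3}\circ\sqsubseteq^{e_1,e_2}=\sqsubseteq^{e_1,e_3}$, and the compatibility $\sqsubseteq^{(e_1 e_2),(e_3 e_4)}\circ\mu^{e_1,e_2}=\mu^{e_3,e_4}\circ(\sqsubseteq^{e_1,e_3}\ast\sqsubseteq^{e_2,e_4})$ all descend under $p$ to equations of underlying morphisms in $\Meas\times\Meas$ that already hold for $(\mathcal{G}\times\mathcal{G},\eta\times\eta,\mu\times\mu)$; since $\BRel(\Meas)$-arrows are determined by their underlying pairs of measurable maps, the equations lift. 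The lifting conditions $p\mathcal{G}^{(\gamma,\delta)}=(\mathcal{G}\times\mathcal{G})p$, $p(\eta)=(\eta\times\eta)p$, $p(\mu^{e_1,e_2})=(\mu\times\mu)p$, and $p(\sqsubseteq^{e_1,e_2})=\mathrm{id}$ hold by construction.
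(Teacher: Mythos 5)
Your proof is correct, and its skeleton---using faithfulness of $p$ to reduce the graded-monad axioms to the corresponding equations for $(\mathcal{G}\times\mathcal{G},\eta\times\eta,\mu\times\mu)$, and then checking the four well-definedness conditions (functoriality, unit, multiplication, grade monotonicity)---is exactly the paper's. Where you genuinely diverge is the multiplication step. The paper first establishes an integral characterisation of $\mathcal{G}^{(\gamma,\delta)}\Phi$ in terms of test pairs $(f,g)\colon\Phi\to{\leq}$ (Lemma \ref{lem:codensity-form}, proved in the appendix in the style of Katsumata et al.) and then applies it to the truncated pair $f=\max(\mathrm{ev}_A-\delta',0)$, $g=\min(\gamma'\,\mathrm{ev}_B,1)$; the truncation is precisely what avoids the cruder grade $\delta+\gamma\delta'$. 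You instead argue directly from the definition, expressing $\mu\Xi_1(A)$ and $\mu\Xi_2(B)$ by the layer-cake formula over the super-level sets $\mathcal{A}_t$ of $\mathrm{ev}_A$ and $\mathcal{B}_t$ of $\gamma_2\mathrm{ev}_B+\delta_2$, observing $\mathcal{G}^{(\gamma_2,\delta_2)}\Phi(\mathcal{A}_t)\subseteq\mathcal{B}_t$, and splitting the $t$-integral at $\delta_2$; your split plays the same role as the paper's truncation, and your computation is sound (including the dispatch of the degenerate case $\delta_2\geq 1$). Your route is more elementary and self-contained, since it never needs the characterisation lemma, and it even yields the marginally sharper constant $\delta_1+\delta_2-\delta_1\delta_2$; the paper's route buys a reusable lemma, which it exploits again later (e.g.\ in the soundness proof of the [rand] rule). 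One small point you should make explicit: to invoke the outer relation on the pair $(\mathcal{A}_t,\mathcal{B}_t)$ these sets must be measurable in $\mathcal{G}X$ and $\mathcal{G}Y$, which is immediate because they are preimages of intervals under $\mathrm{ev}_A$ and $\mathrm{ev}_B$, measurable by the very definition of $\Sigma_{\mathcal{G}X}$ and $\Sigma_{\mathcal{G}Y}$; this is a one-line addition, not a gap in the argument.
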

\begin{proof}
Since the functor $p$ is faithful, it suffices to show:
\begin{enumerate}
\item \label{enum:gradedlifting1}
	Each $\mathcal{G}^{(\gamma,\delta)}$ is an endofunctor on $\BRel(\Meas)$.
\item \label{enum:gradedlifting2}
	$(\mathrm{id}_{\mathcal{G}X},\mathrm{id}_{\mathcal{G}Y})$ is an arrow
 	$\mathcal{G}^{(\gamma,\delta)} \Phi \to \mathcal{G}^{(\gamma',\delta')} \Phi$ in $\BRel(\Meas)$
	for all $\gamma,\gamma',\delta,\delta'$ such that $\gamma \leq \gamma'$ and $\delta \leq \delta'$.
\item \label{enum:gradedlifting3}

	$(\eta_X, \eta_Y)$ is an arrow
	$\Phi \to \mathcal{G}^{(1,0)}\Phi$ in $\BRel(\Meas)$.
\item \label{enum:gradedlifting4}
	$(\mu_X, \mu_Y)$ is an arrow
	$\mathcal{G}^{(\gamma,\delta)}\mathcal{G}^{(\gamma',\delta')}\Phi \to \mathcal{G}^{(\gamma \gamma', \delta + \delta)}\Phi$ in $\BRel(\Meas)$
	for all $\gamma,\gamma',\delta,\delta'$.
\end{enumerate}
(\ref{enum:gradedlifting1})
Since the mapping $(f,g) \mapsto (\mathcal{G}f,\mathcal{G}g)$ is obviously functorial, it suffices to check that $(\mathcal{G}f,\mathcal{G}g)$ is an arrow $\mathcal{G}^{(\gamma,\delta)}\Psi \to \mathcal{G}^{(\gamma,\delta)}\Phi$ in $\BRel(\Meas)$ for any arrow $(f,g) \colon \Psi \to \Phi$ in $\BRel(\Meas)$.
This is proved from $\Phi(A) \subseteq B \implies \Psi(\inverse{f}(A)) \subseteq \inverse{g}(B)$ for any $A \in \Sigma_X$ and $B \in \Sigma_Y$.
(\ref{enum:gradedlifting2}) Obvious.
(\ref{enum:gradedlifting3}) Obvious.
(\ref{enum:gradedlifting4}) It suffices to show $(\mu_X \times \mu_Y)(\mathcal{G}^{(\gamma,\delta)}\mathcal{G}^{(\gamma',\delta')}\Phi) \subseteq \mathcal{G}^{(\gamma \gamma', \delta + \delta)}\Phi$ for any $\Phi \subseteq X \times Y$.

First, the following equation holds: 
\begin{align*}
	\mathcal{G}^{(\gamma,\delta)} \Phi
	&=\SetBraket{
		(\nu_1,\nu_2) 
		| \forall{(f,g) \colon \Phi \to {\leq} \text { in } \BRel(\Meas)} .
			\int_X f~d\nu_1 \leq \gamma\!\int_Y g~d\nu_2 + \delta
	},
\end{align*}
where $\leq$ is the numerical order relation on $\mathcal{G}1 \simeq [0,1]$.
We omit the proof of this equation.
It can be shown in the same way as \cite[Theorem 12]{katsumata_et_al:LIPIcs:2015:5532}.

Let $(\Xi_1,\Xi_2) \in \mathcal{G}^{(\gamma,\delta)}\mathcal{G}^{(\gamma',\delta')}\Phi$.
Assume $\Phi(A) \subseteq B$.
We give $(f,g)\colon \mathcal{G}^{(\gamma',\delta')}\Phi\to\leq$ in $\BRel(\Meas)$ by 
$f = \max(\mathrm{ev}_A - \delta', 0)$ and $g = \min(\gamma' \cdot\mathrm{ev}_B, 1)$.
They actually satisfy $f(\nu_1) \leq g(\nu_2)$ for each $(\nu_1,\nu_2) \in \mathcal{G}^{(\gamma',\delta')}\Phi$.
Hence,
\begin{align*}
	\mu_X(\Xi_1)(A) - \delta'
	&\leq \int_{\mathcal{G}X} (\mathrm{ev}_A- \delta')~d \Xi_1
	\leq \int_{\mathcal{G}X} f~d \Xi_1\\
	&\leq  \gamma \int_{\mathcal{G}X} g~d \Xi_2 + \delta
	\leq  \gamma \int_{\mathcal{G}X} \gamma' \mathrm{ev}_B~d \Xi_2 + \delta
	= \gamma \gamma' \mu_Y(\Xi_2)(B) + \delta.
\end{align*}
This implies $\mu_X(\Xi_1)(A) \leq \gamma \gamma' \mu_Y(\Xi_2)(B) + \delta + \delta'$.
\end{proof}
The $M$-graded lifting $\{\mathcal{G}^{(\gamma,\delta)}\}_{(\gamma,\delta) \in M}$
describes only one side of inequalities in the definition of differential privacy.
By symmetrising this, we obtain the following $M$-graded lifting $\{\overline{\mathcal{G}^{(\gamma,\delta)}}\}_{(\gamma,\delta) \in M}$ exactly describing the differential privacy for continuous probabilities:
\[
\overline{\mathcal{G}^{(\gamma,\delta)}} = \mathcal{G}^{(\gamma,\delta)}(-) \cap \opposite{(\mathcal{G}^{(\gamma,\delta)}\opposite{(-)})}.
\]
\begin{theorem}\label{thm:privacy_gradedmonad}
A measurable function $c \colon \mathbb{R}^m \to \mathcal{G}(\mathbb{R}^n)$ is $(\varepsilon,\delta)$-differentially private
\emph{if and only if}
$(c,c)$ is an arrow $\SetBraket{(x,y)| {||x - y||_1} \leq 1} \to \overline{\mathcal{G}^{(\exp(\varepsilon),\delta)}}\mathrm{Eq}_{\mathbb{R}^n}
$ in $\BRel(\Meas)$.
\end{theorem}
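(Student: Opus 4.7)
The plan is to unfold the definition of $\overline{\mathcal{G}^{(\exp(\varepsilon),\delta)}}\mathrm{Eq}_{\mathbb{R}^n}$ pointwise, reduce the universally quantified condition involving pairs $(A,B)$ to the single-set condition $\nu_1(A) \leq \gamma \nu_2(A) + \delta$, and then observe that the symmetry of the Euclidean $1$-norm makes the one-sided differential privacy inequality automatically two-sided.

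First, I would compute $\mathcal{G}^{(\gamma,\delta)}\mathrm{Eq}_{\mathbb{R}^n}$. Since $\mathrm{Eq}_{\mathbb{R}^n}(A) = A$, the defining condition becomes
\[
  \forall A, B \in \Sigma_{\mathbb{R}^n}.\; A \subseteq B \implies \nu_1(A) \leq \gamma \nu_2(B) + \delta.
\]
The key simplification is that this is equivalent to $\forall A.\; \nu_1(A) \leq \gamma \nu_2(A) + \delta$: one direction is obtained by taking $B = A$, and the other direction follows from monotonicity of $\nu_2$, so that $\nu_1(A) \leq \gamma \nu_2(A) + \delta \leq \gamma \nu_2(B) + \delta$ whenever $A \subseteq B$.

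Next, because equality is symmetric we have $\opposite{\mathrm{Eq}_{\mathbb{R}^n}} = \mathrm{Eq}_{\mathbb{R}^n}$, so
\[
  \opposite{\bigl(\mathcal{G}^{(\gamma,\delta)}\opposite{\mathrm{Eq}_{\mathbb{R}^n}}\bigr)}
  = \{(\nu_1,\nu_2) \mid \forall A.\; \nu_2(A) \leq \gamma \nu_1(A) + \delta\}.
\]
Intersecting the two descriptions yields
\[
  \overline{\mathcal{G}^{(\gamma,\delta)}}\mathrm{Eq}_{\mathbb{R}^n}
  = \{(\nu_1,\nu_2) \mid \forall A.\; \nu_1(A) \leq \gamma \nu_2(A) + \delta \text{ and } \nu_2(A) \leq \gamma \nu_1(A) + \delta\}.
\]

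Finally, $(c,c)$ being an arrow from $\{(x,y)\mid \|x-y\|_1 \leq 1\}$ to $\overline{\mathcal{G}^{(\exp(\varepsilon),\delta)}}\mathrm{Eq}_{\mathbb{R}^n}$ means exactly that for all $x,y$ with $\|x-y\|_1 \leq 1$ and all measurable $A$, both $c(x)(A) \leq \exp(\varepsilon) c(y)(A) + \delta$ and $c(y)(A) \leq \exp(\varepsilon) c(x)(A) + \delta$. The forward implication is immediate. For the converse, a function satisfying the differential privacy inequality in one direction automatically satisfies the other: since $\|y-x\|_1 = \|x-y\|_1 \leq 1$, applying the defining inequality to the swapped pair $(y,x)$ yields the second condition. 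The main work — and really the only nontrivial step — is the reduction of the $(A,B)$-quantified condition to the $A = B$ case, which relies only on monotonicity of measures; the rest is bookkeeping of opposites and unfolding of definitions.
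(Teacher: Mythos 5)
Your proof is correct and is exactly the argument the paper leaves implicit: the theorem is stated without proof because it amounts to unfolding $\overline{\mathcal{G}^{(\gamma,\delta)}}$ at the symmetric relation $\mathrm{Eq}_{\mathbb{R}^n}$, reducing the $(A,B)$-quantification to $A=B$ by monotonicity, and using symmetry of $\|\cdot\|_1$ to get the two-sided inequality from the one-sided definition. Your reduction and bookkeeping of opposites are all sound, so the proposal matches the intended (omitted) proof.
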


In the original works \cite{Barthe:2012:PRR:2103656.2103670,BartheOlmedo2013} of apRHL,
the following relational lifting $(-)^{\sharp(\gamma,\delta)}$ is introduced to describe differential privacy.
This lifting relates two distributions if there are intermediate distributions $d_1$ and $d_R$, called \emph{witnesses}, whose skew distance, defined by
$
\Delta^X_\gamma(d_L,d_R) = \sup_{C \subseteq X}\left\{\left| d_L(C) - \gamma d_R(C) \right|, \left| d_R(C) - \gamma d_L(C) \right|\right\}
$, is less than or equal to $\delta$.
\begin{definition}(\cite[Definition 4]{BartheOlmedo2013}, \cite[Definition 4.3]{olmedo2014approximate} and \cite[Definition 8]{2016arXiv160105047B})\label{def:lifting:witness}
We denote by $\mathcal{D}$ the subdistribution monad over $\Set$.
Let $\Psi$ be a relation between sets $X$ and $Y$, and
$d_1 \in \mathcal{D}X$ and $d_2 \in \mathcal{D}Y$ be two subdistributions.
We define the relation $\mathbin{\Psi^{\sharp(\gamma,\delta)}} \subseteq \mathcal{D}X \times \mathcal{D}Y$ as follows: $(d_1, d_2) \in \mathbin{\Psi^{\sharp(\gamma,\delta)}}$ if and only if
there are two subdistributions $d_L, d_R \in \mathcal{D}(X \times Y)$, called \emph{witnesses}, such that
\[
\mathcal{D}\pi_1(d_L) = d_1,~
\mathcal{D}\pi_2(d_R) = d_2,~
\mathrm{supp}(d_L) \subseteq \Psi,~
\mathrm{supp}(d_R) \subseteq \Psi,~
\Delta^{X \times Y}_{\gamma}(d_L,d_R) \leq \delta.
\]
\end{definition}
\begin{proposition}\label{prop:lifting:compare}
For any countable discrete spaces $X$ and $Y$, and relation $\Psi \subseteq X \times Y$, 
we have $\Psi^{\sharp(\gamma,\delta)} \subseteq \overline{\mathcal{G}^{(\gamma,\delta)}} \Psi$.
\end{proposition}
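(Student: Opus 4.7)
The plan is to unpack $\overline{\mathcal{G}^{(\gamma,\delta)}}\Psi = \mathcal{G}^{(\gamma,\delta)}\Psi \cap \opposite{(\mathcal{G}^{(\gamma,\delta)}\opposite{\Psi})}$ and verify both conjuncts directly, using the two witnesses $d_L, d_R$ provided by Definition~\ref{def:lifting:witness}. So given $(d_1,d_2)\in \Psi^{\sharp(\gamma,\delta)}$ with witnesses $d_L, d_R$, I need to show (i) $d_1(A)\le \gamma d_2(B)+\delta$ whenever $\Psi(A)\subseteq B$ and (ii) $d_2(B)\le \gamma d_1(A)+\delta$ whenever $\opposite{\Psi}(B)\subseteq A$.

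For (i), the key observation is that the support condition $\supp(d_L)\subseteq \Psi$ lets me enlarge the event $A\times Y$ without losing mass: if $\Psi(A)\subseteq B$, then every pair $(x,y)$ in $\supp(d_L)\cap (A\times Y)$ automatically has $y\in B$, so $d_L(A\times Y)\le d_L(X\times B)$. Since $\mathcal{D}\pi_1(d_L)=d_1$, the left-hand side equals $d_1(A)$. Then the skew-distance bound $\Delta^{X\times Y}_\gamma(d_L,d_R)\le \delta$ applied to $C = X\times B$ yields $d_L(X\times B)\le \gamma d_R(X\times B)+\delta = \gamma d_2(B)+\delta$, since $\mathcal{D}\pi_2(d_R)=d_2$. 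Chaining the two inequalities gives (i). For (ii), I run the symmetric argument starting from $d_R$: $d_2(B)=d_R(X\times B)\le d_R(A\times Y)$ using $\supp(d_R)\subseteq \Psi$ and $\opposite{\Psi}(B)\subseteq A$, and then $d_R(A\times Y)\le \gamma d_L(A\times Y)+\delta = \gamma d_1(A)+\delta$ by the other half of the skew-distance bound.

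I do not expect any real obstacle here: both steps are a single application of the support condition followed by a single application of the $\Delta^{X\times Y}_\gamma$ bound, and the countable-discrete hypothesis removes all measurability concerns (every subset of $X\times Y$ is measurable, so $\supp(d_L)\subseteq \Psi$ really does force $d_L(A\times Y)=d_L((A\times Y)\cap\Psi)$). The only place one must be slightly careful is to match up \emph{which} witness is used for \emph{which} inequality: $d_L$ has the correct first marginal and so handles direction (i), whereas $d_R$ has the correct second marginal and so handles direction (ii); swapping them would break the step where a marginal of the witness is identified with $d_1$ or $d_2$.
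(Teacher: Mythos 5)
Your proposal is correct and follows essentially the same route as the paper: identify $d_1(A)$ with $d_L(A\times Y)$, use $\supp(d_L)\subseteq\Psi$ together with $\Psi(A)\subseteq B$ to pass to $d_L(X\times B)$, and then apply the skew-distance bound and the marginal condition on $d_R$ to conclude $d_1(A)\le\gamma d_2(B)+\delta$. The only cosmetic difference is that you write out the second conjunct explicitly with the roles of $d_L$ and $d_R$ exchanged, whereas the paper dispatches it by noting the symmetry of the construction $(-)^{\sharp(\gamma,\delta)}$.
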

\begin{proof}
Suppose $(d_1, d_2) \in \mathbin{\Psi^{\sharp(\gamma,\delta)}}$ with witnesses $d_L$ and $d_R$.
For any $A \subseteq X$, since $\supp(d_L) \subseteq \Psi$ and $(A \times Y) \cap \Psi \subseteq X \times \Psi(A)$, we obtain:
\begin{align*}
d_1(A)
&=
\mathcal{D}\pi_1(d_L)(A)
=d_L(A \times Y)
=d_L((A \times Y) \cap \Psi)
\leq d_L(X \times \Psi(A))\\
&\leq \gamma d_R (X \times \Psi(A)) +\delta
= \gamma \mathcal{D}\pi_2(d_R)(\Psi(A)) +\delta
= \gamma d_2(\Psi(A)) + \delta.
\end{align*}
This implies $(d_1,d_2) \in \mathcal{G}^{(\gamma,\delta)}\Psi$.
Since the construction of $(-)^{\sharp(\gamma,\delta)}$ is symmetric,
we conclude $(d_1,d_2) \in \overline{\mathcal{G}^{(\gamma,\delta)}}\Psi$.
\end{proof}
We remark $\mathcal{G}X = \mathcal{D}X$ for countable discrete space $X$.
When $X$ is not countable, we have the above results by embedding each $d \in \mathcal{D}X$ in the set $\mathcal{D}X'$ of subprobability distributions over the countable \emph{subspace} $X' = X \cap \supp(d)$. 
\begin{corollary}
We have $\mathrm{Eq}_X^{\sharp(\gamma,\delta)} = \overline{\mathcal{G}^{(\gamma,\delta)}}\mathrm{Eq}_X$ for each countable discrete space $X$.
\end{corollary}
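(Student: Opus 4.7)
The inclusion $\mathrm{Eq}_X^{\sharp(\gamma,\delta)} \subseteq \overline{\mathcal{G}^{(\gamma,\delta)}}\mathrm{Eq}_X$ is already supplied by Proposition~\ref{prop:lifting:compare} (together with the remark that $\mathcal{G}X = \mathcal{D}X$ for countable discrete $X$), so my plan is to concentrate on the reverse inclusion by explicitly constructing witnesses. Given $(d_1, d_2) \in \overline{\mathcal{G}^{(\gamma,\delta)}}\mathrm{Eq}_X$, I would exploit the diagonal embedding $j \colon X \to X \times X$, $j(x) = (x,x)$, and take $d_L := \mathcal{D}j(d_1)$ and $d_R := \mathcal{D}j(d_2)$ as the required witnesses. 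Both are supported on $\mathrm{Eq}_X = j(X)$ by construction, and since $\pi_i \circ j = \mathrm{id}_X$, the marginal conditions $\mathcal{D}\pi_1(d_L) = d_1$ and $\mathcal{D}\pi_2(d_R) = d_2$ follow immediately from the functoriality of $\mathcal{D}$.

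The central remaining task is to verify the skew-distance bound $\Delta^{X \times X}_\gamma(d_L, d_R) \leq \delta$. For any $C \subseteq X \times X$, let $A_C := \{x \in X \mid (x,x) \in C\}$; the diagonal support then forces $d_L(C) = d_1(A_C)$ and $d_R(C) = d_2(A_C)$. Unpacking $\overline{\mathcal{G}^{(\gamma,\delta)}} = \mathcal{G}^{(\gamma,\delta)}(-) \cap \opposite{(\mathcal{G}^{(\gamma,\delta)}\opposite{(-)})}$ at the self-opposite relation $\mathrm{Eq}_X$ delivers both $(d_1, d_2) \in \mathcal{G}^{(\gamma,\delta)}\mathrm{Eq}_X$ and $(d_2, d_1) \in \mathcal{G}^{(\gamma,\delta)}\mathrm{Eq}_X$. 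Because $\mathrm{Eq}_X(A) = A$, specialising the defining inequality of $\mathcal{G}^{(\gamma,\delta)}$ to $A = B = A_C$ produces the two-sided estimate $d_1(A_C) \leq \gamma d_2(A_C) + \delta$ and $d_2(A_C) \leq \gamma d_1(A_C) + \delta$, which transfers verbatim to $d_L(C)$ and $d_R(C)$.

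The main obstacle I anticipate is purely bookkeeping: confirming that the symmetrisation built into $\overline{\mathcal{G}^{(\gamma,\delta)}}$ really produces both one-sided bounds needed by the skew distance, and that the tightest instance of the $\mathcal{G}^{(\gamma,\delta)}$ constraint over pairs $A \subseteq B$ is obtained at $B = A$ (which holds because $\mathrm{Eq}_X(A) = A$ and $\nu_2(B)$ is monotone in $B$). No measure-theoretic subtlety is expected since $X$ is countable discrete, so every subset is measurable and every subdistribution has countable support; in particular the pushforwards $\mathcal{D}j(d_i)$ are bona fide subdistributions on $X \times X$ and their values on arbitrary subsets $C$ are evaluated via $A_C$ without any $\sigma$-algebra caveats.
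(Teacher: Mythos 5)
Your proposal is correct and follows essentially the same route as the paper: the forward inclusion is quoted from Proposition~\ref{prop:lifting:compare}, and for the converse you take the diagonal pushforwards $d_L=\mathcal{D}j(d_1)=\sum_{x}d_1(x)\cdot\delta_{(x,x)}$ and $d_R=\mathcal{D}j(d_2)$ as witnesses, exactly the witnesses the paper uses. Your verification of the skew-distance bound via $A_C$ and the two one-sided inequalities at $A=B$ just spells out the equivalence $(d_1,d_2)\in\overline{\mathcal{G}^{(\gamma,\delta)}}\mathrm{Eq}_X \iff \Delta^X_\gamma(d_1,d_2)\leq\delta$ that the paper invokes more tersely.
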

\begin{proof}
($\subseteq$)
This inclusion is given from Proposition \ref{prop:lifting:compare}.
($\supseteq$)
Suppose $(d_1,d_2) \in \overline{\mathcal{G}^{(\gamma,\delta)}}\mathrm{Eq}_X$.
This is equivalent to $\Delta^X_\gamma (d_1,d_2)\leq \delta$.
Hence $(d_1, d_2) \in \mathrm{Eq}_X^{\sharp(\gamma,\delta)}$ is proved by the witnesses given by $d_L = \sum_{x \in X} d_1(x) \cdot \delta_{(x,x)}$ and $d_R = \sum_{x \in X} d_2(x) \cdot \delta_{(x,x)}$.
\end{proof}
\section{The Continuous apRHL}
We introduce a variant of the approximate probabilistic relational Hoare logic (apRHL) to deal with continuous random samplings.
We name it the \emph{continuous apRHL}.
\subsection{The Language pWHILE}
We recall and reformulate categorically the language pWHILE \cite{Barthe:2012:PRR:2103656.2103670}.
In this paper, we mainly refer to the categorical semantics of a probabilistic language given in \cite[Section 2]{Brown2009193}.
The language pWHILE is constructed in the standard way, hence we sometimes omit the details of its construction.
\subsubsection{Syntax}
We introduce the syntax of pWHILE by the following BNF:
\begin{align*}
	\tau &
		::= \mathtt{bool}
		\mid \mathtt{int}
		\mid \mathtt{real}
		\mid \ldots
		\\
	e &
		::= x
		\mid p (e_1, \ldots, e_m)
		\\
	\nu &
		::=
		d (e_1, \ldots, e_m)
		\\
	i &
		::= x \leftarrow e
		\mid x \xleftarrow{\$} \nu
		\mid \mathtt{if}~e~\mathtt{then}~c_1~\mathtt{else}~c_2
		\mid \mathtt{while}~e~\mathtt{do}~c
		\\
	c &
		::= \mathtt{skip}
		\mid \mathtt{null}
		\mid \mathcal{I} ; \mathcal{C}
\end{align*}
Here, $\tau$ is a \emph{value type}; $x$ is a \emph{variable}; $p$ is an \emph{operation}; $d$ is a \emph{probabilistic operation}; $e$ is an \emph{expression}; $\nu$ is a \emph{probabilistic expression}; $i$ is an \emph{imperative}; $c$ is a \emph{command} (or program).
We remark constants are $0$-ary operations.

We introduce the following syntax sugars for simplicity:
\begin{align*}
	\mathtt{if}~b~\mathtt{then}~c
		&
		= \mathtt{if}~b~\mathtt{then}~c~\mathtt{else}~\mathtt{skip}
		\\
	[\mathtt{while}~b~\mathtt{do}~c]_n
		&
		=
		\begin{cases} 
			\mathtt{if}~b~\mathtt{then}~\mathtt{null}~\mathtt{else}~\mathtt{skip}, & \text{ if } n=0
		\\
			\mathtt{if}~b~\mathtt{then}~c;[\mathtt{while}~b~\mathtt{do}~c]_{k}, & \text{ if } n=k+1
		\end{cases}
\end{align*}
\subsubsection{Typing Rules}
We introduce a typing rule on the language pWHILE.
A typing context is a finite set 
$\Gamma = \{x_1\colon\tau_1,x_2\colon\tau_2,\ldots,x_n\colon\tau_n \}$
of pairs of a variable and a value type such that each variable occurs only once in the context.

We give typing rules of pWHILE as follows:
\[
	\AxiomC{$\Gamma \vdash^t e_1\colon\tau_1 ~\ldots~ \Gamma \vdash^t e_n\colon\tau_n$
	\quad
	$p \colon  (\tau_1,\ldots, \tau_n)  \to \tau$}
	\UnaryInfC{$\Gamma \vdash^t p(e_1,\ldots,e_n) \colon \tau$}
	\DisplayProof
\quad
	\AxiomC{$\Gamma,x\colon\tau \vdash^t e \colon\tau$}
	\UnaryInfC{$\Gamma, x\colon\tau  \vdash x \leftarrow e$}
	\DisplayProof
\quad
	\AxiomC{\hspace{1em}}
	\UnaryInfC{$\Gamma \vdash \mathtt{skip}$}
	\DisplayProof
\]
\[
	\AxiomC{
	$x\colon\tau\in \Gamma$\quad
	$\Gamma \vdash^t e_1\colon\tau_1 ~\ldots~ \Gamma \vdash^t e_n\colon\tau_n$\quad
	$d \colon  (\tau_1,\ldots, \tau_n)  \to \tau$}
	\UnaryInfC{$\Gamma \vdash x \xleftarrow{\$}  d(e_1,\ldots,e_n) \colon \tau$}
	\DisplayProof
\quad
	\AxiomC{\hspace{1em}}
	\UnaryInfC{$\Gamma \vdash \mathtt{null}$}
	\DisplayProof
\]
\[
	\AxiomC{$\Gamma \vdash i$\quad$\Gamma \vdash c$}
	\UnaryInfC{$\Gamma \vdash i ; c$}
	\DisplayProof
\quad
	\AxiomC{$\Gamma \vdash^t b\colon\mathtt{bool}$\quad$\Gamma \vdash c_1$\quad$\Gamma \vdash c_2$}
	\UnaryInfC{$\Gamma \vdash \mathtt{if}~b~\mathtt{then}~c_1~\mathtt{else}~c_2$}
	\DisplayProof
\quad
	\AxiomC{$\Gamma \vdash^t b\colon\mathtt{bool}$\quad$\Gamma \vdash c$}
	\UnaryInfC{$\Gamma \vdash \mathtt{while}~b~\mathtt{do}~c$}
	\DisplayProof
\]
Here, the type $(\tau_1,\ldots, \tau_n) \to \tau$ of each operation $p$ and each probabilistic operation $d$ are assumed to be given in advance.

We easily define inductively the set of free variables of commands, expressions, and  probabilistic expressions (denoted by $FV(c)$, $FV(e)$, and $FV(\nu)$).
\subsubsection{Denotational Semantics}
We introduce a denotational semantics of pWHILE in $\Meas$.
We give the interpretations $\interpret{\tau}$ of the value types $\tau$:
\begin{itemize}
\item $\interpret{\mathtt{bool}} = \mathbb{B} = 1 + 1 = \{\mathtt{true},\mathtt{false}\}$ (discrete space)
\item $\interpret{\mathtt{int}} = \mathbb{Z}$ (discrete space)
\item $\interpret{\mathtt{real}} = \mathbb{R}$ (Lebesgue measurable space)
\end{itemize}
We interpret a typing context $\Gamma = \{x_1\colon\tau_1,x_2\colon\tau_2,\ldots,x_n\colon\tau_n \}$ as the product space $\interpret{\tau_1} \times \interpret{\tau_2} \times \cdots \times \interpret{\tau_n}$.
We interpret each operation $p \colon (\tau_1, \ldots \tau_m) \to \tau$ as
a measurable function $\interpret{p}\colon \interpret{\tau_1}\times\cdots\times\interpret{\tau_m} \to \interpret{\tau}$,
and each probabilistic operation $d \colon (\tau_1, \ldots \tau_m) \to \tau$ as
$\interpret{d}\colon \interpret{\tau_1}\times\cdots\times\interpret{\tau_m} \to \mathcal{G}\interpret{\tau}$.
Typed terms$\Gamma \vdash^t e \colon \tau$ and commands $\Gamma \vdash c$ are interpreted to measurable functions of the forms $\interpret{\Gamma} \to \interpret{\tau}$ and $\interpret{\Gamma} \to \mathcal{G}\interpret{\Gamma}$ respectively.

The interpretation of expressions are defined inductively by:
\begin{align*}
\interpret{\Gamma \vdash^t x\colon\tau} = \pi_{x\colon\tau}
\quad \interpret{\Gamma \vdash^t p(e_1,\ldots,\e_m)} &= \interpret{p}(\interpret{\Gamma \vdash^t e_1},\ldots \interpret{\Gamma \vdash^t e_m})
\end{align*}
The interpretation of commands are defined inductively by:
\[
\interpret{\Gamma\vdash\mathtt{skip}}
	= \eta_{\interpret{\Gamma}}
	\quad
\interpret{\Gamma\vdash\mathtt{null}}
	=\bot_{\interpret{\Gamma},\interpret{\Gamma}}
	\quad
\interpret{\Gamma\vdash i ; c}
	= {(\interpret{\Gamma\vdash c})}^\sharp \circ \interpret{\Gamma\vdash i}
\]
\vspace{-2.4\baselineskip}
\begin{align*}
\lefteqn{\interpret{{\Gamma} \vdash x \xleftarrow{\$} d(e_1,\ldots,\e_m)}}\\
	&= \mathcal{G}(\rho_{(x\colon\tau, \Gamma)})
	\circ\mathrm{st}_{\interpret{\tau},\interpret{\Gamma}}
	\circ \lrangle{\interpret{d}(\interpret{\Gamma \vdash^t e_1},\ldots \interpret{\Gamma \vdash^t e_m}),\mathrm{id}_{\interpret{\Gamma}}}
\end{align*}
\vspace{-2.4\baselineskip}
\begin{align*}
\interpret{{\Gamma, x\colon\tau}\vdash x \leftarrow e}
	&
	= \eta_{\interpret{\Gamma, x\colon\tau}}
	\circ \rho_{(x\colon\tau, \Gamma)}
	\circ \lrangle{ \interpret{{\Gamma,x\colon\tau} \vdash e} ,\mathrm{id}_{\interpret{\Gamma, x\colon\tau}}}
	\\
\interpret{\Gamma \vdash \mathtt{if}~b~\mathtt{then}~c_1~\mathtt{else}~c_2}
	&
	=\left[\interpret{\Gamma\vdash c_1},\interpret{\Gamma\vdash c_2} \right]
	\circ \cong_{\interpret{\Gamma}}
	\circ\lrangle{\interpret{\Gamma\vdash b} ,\mathrm{id}_{\interpret{\Gamma}}}
	\\
\interpret{\Gamma \vdash \mathtt{while}~b~\mathtt{do}~c}
	&
	= \sup_{n\in\mathbb{N}} \interpret{\Gamma\vdash [\mathtt{while}~e~\mathtt{do}~c]_n}
\vspace{-\baselineskip}
\end{align*}
Here,
\begin{itemize}
\item
	$
	\rho_{(x_k\colon\tau_k, \Gamma)}
	= \lrangle{f_l}_{l \in \{1,2,\ldots,n\}}
	\colon \interpret{\tau_k} \times \interpret{\Gamma} \to \interpret{\Gamma}
	$,
	where $\Gamma = \{x_1\colon\tau_1,x_2\colon\tau_2,\ldots,x_n\colon\tau_n\}$,
	$f_k = \pi_2$, and $f_l = \pi_l \circ \pi_2$ ($l \neq k$).
\item
	${\cong_X} \colon 2 \times X \to X+X$ is
	the inverse of $[\lrangle{\iota_1 \circ !_X,~ id}, \lrangle{\iota_2 \circ !_X,~ id}] \colon X + X \to 2 \times X$, which is obtained from the distributivity of the category $\Meas$.
\end{itemize}
We remark that, from the commutativity of the monad $\mathcal{G}$, if $\Gamma \vdash x \colon \tau$ and $x \notin FV(c)$ then $\interpret{\Gamma \vdash c} \cong \mathrm{dst}_{\interpret{\Gamma'},\interpret{\tau}}(\interpret{\Gamma' \vdash c}\times \eta_\interpret{\tau})$ where $\Gamma' = \Gamma \setminus \{x \colon \tau\}$.
\subsection{Judgements of apRHL}
A judgement of apRHL is
\[
	c_1 \sim_{\gamma, \delta} c_2 \colon \Psi \Rightarrow \Phi,
\]
where $c_1$ and $c_1$ are commands, and $\Psi$ and $\Phi$ are objects in $\BRel(\Meas)$.
We call the relations $\Psi$ and $\Phi$ the \emph{precondition} and \emph{postcondition} of the judgement respectively.
Inspired from the validity of asymmetric apRHL \cite{Barthe:2012:PRR:2103656.2103670}, we introduce the validity of the judgement of apRHL.
\begin{definition}
	Let $\Psi$ and $\Phi$ be relations over the space $\interpret{\Gamma}$.
	A judgement $c_1 \sim_{\gamma, \delta} c_2 \colon \Psi \Rightarrow \Phi$
	is valid (written $\models c_1 \sim_{\gamma, \delta} c_2 \colon \Psi \Rightarrow \Phi$)
	when $(\interpret{\Gamma \vdash c_1}, \interpret{\Gamma \vdash c_2})$ is an arrow  $\Psi \to \overline{\mathcal{G}^{(\gamma, \delta)}} \Phi$ in $\BRel(\Meas)$.
\end{definition}
We often write preconditions and postconditions in the following manner: 
Let $\Gamma = \{x_1\colon\tau_1,x_2\colon\tau_2,\ldots,x_n\colon\tau_n\}$.
Assume $\Gamma \vdash e_1 \colon \tau$ and $\Gamma \vdash e_2 \colon \tau$, and let $R$ be a relation on $\interpret{\tau}$ (e.g. $=$, $\leq$,... ).
We define the relation $e_1 \lrangle{1} R e_2 \lrangle{2}$ on $\interpret{\Gamma}$ by
\[
(e_1 \lrangle{1} R e_2 \lrangle{2})
	=\SetBraket{ (m_1,m_2) \in \interpret{\Gamma}| \interpret{\Gamma\vdash e_1}(m_1) R \interpret{\Gamma\vdash e_2}(m_2)}.
\]
\subsection{Proof Rules}
We mainly refer the proof rules of apRHL from \cite{Barthe:2012:PRR:2103656.2103670,olmedo2014approximate}, but we modify the [comp] and [frame] rules to verify differential privacy for continuous random samplings.
\[
	\AxiomC{$
		\begin{array}{l@{}}
			{x_1 \colon \tau_1}, {x_2 \colon \tau_2} \in \Gamma
				\quad
			\Gamma \vdash^t e_1 \colon \tau_1
				\quad
			\Gamma \vdash^t e_2 \colon \tau_2
				\\
			(\rho_{(x_1 \colon \tau_1,\Gamma)} \circ \lrangle{\interpret{e_1}, \mathrm{id}},  \rho_{(x_2 \colon \tau_2,\Gamma)} \circ \lrangle{\interpret{e_2}, \mathrm{id}}) \colon \Psi \to \Phi 
		\end{array}
	$}
	\RightLabel{{[assn]}}
	\UnaryInfC{$\models x_1 \leftarrow e_1 \sim_{(1,0)}  x_2 \leftarrow e_2 \colon \Psi \Rightarrow \Phi$}
	\DisplayProof
\]
\[
	\AxiomC{$
		\begin{array}{l@{}}
			\Gamma \vdash^t e^1_1 \colon \tau ~\ldots~ \Gamma \vdash^t e^1_m \colon \tau
				\quad
			\Gamma \vdash^t e^2_1 \colon \tau ~\ldots~ \Gamma \vdash^t e^2_m \colon \tau
				\quad
			x_1 \colon \tau, x_2 \colon \tau \in \Gamma 
				\\
			d \colon (\tau_1,\ldots,\tau_m) \to \tau
				\quad
			(\interpret{d},\interpret{d})\colon \Psi \to \overline{\mathcal{G}^{(\gamma,\delta)}}(\mathrm{Eq}_{\interpret{\tau}})
			 \text{ in } \BRel(\Meas)
		\end{array}
	$}
	\RightLabel{{[rand]}}
	\UnaryInfC{$\models x_1 \xleftarrow{\$} d(e^1_1,\ldots,e^1_m) \sim_{(\gamma,\delta)}  x_2 \xleftarrow{\$} d(e^2_1,\ldots,e^2_m) \colon \Psi' \Rightarrow (x_1\lrangle{1} = x_2\lrangle{1})$}
	\DisplayProof
\]
where $\Psi' = \SetBraket{((g,a),(h,b))| (a,b)\in \Psi, g,h \in \Gamma'}$ ($\Gamma = \{x_1\colon\tau_1,\ldots,x_k\colon\tau_k\} \cup \Gamma'$).
\[
	\AxiomC{$
		\begin{array}{l@{}}
		\models c_1 \sim_{(\gamma,\delta)} c_2 \colon \Psi \Rightarrow \Phi'\\
		\models c_1' \sim_{(\gamma',\delta')} c_2' \colon \Phi' \Rightarrow \Phi
		\end{array}
		$}
	\RightLabel{{[seq]}}
	\UnaryInfC{$\models c_1;c_1'\sim_{(\gamma\gamma',\delta+\delta')} c_2;c_2' \colon \Psi \Rightarrow\Phi$}
	\DisplayProof
\quad
	\AxiomC{}
	\RightLabel{{[skip]}}
	\UnaryInfC{$\models \mathtt{skip} \sim_{(1,0)} \mathtt{skip} \colon \Phi \Rightarrow \Phi$}
	\DisplayProof
\]
\[
	\AxiomC{$
		\begin{array}{l@{}}
			\Gamma \vdash^t b \colon \mathtt{bool}
			\quad \Gamma \vdash^t b \colon \mathtt{bool}
			\quad \Psi \Rightarrow b\lrangle{1}  = b' \lrangle{2}
			\\
			\models c_1 \sim_{(\gamma,\delta)} c_1' \colon \Psi\wedge b\lrangle{1} \Rightarrow \Phi \quad
			\models c_2 \sim_{(\gamma,\delta)} c_2' \colon \Psi\wedge \neg b\lrangle{1} \Rightarrow \Phi
		\end{array}
	$}
	\RightLabel{{[cond]}}
	\UnaryInfC{$\models \mathtt{if}~b~\mathtt{then}~c_1~\mathtt{else}~c_2 \sim_{(\gamma,\delta)} \mathtt{if}~b'~\mathtt{then}~c_1'~\mathtt{else}~c_2' \colon \Psi\Rightarrow\Phi$}
	\DisplayProof
\]
\[
	\AxiomC{$
		\begin{array}{l@{}}
			\Gamma \vdash^t e \colon \mathtt{int}
			\quad \gamma = \prod_{k=0}^{n-1} \gamma_k
			\quad \delta = \sum_{k=0}^{n-1} \delta_k\\
			\Theta \Rightarrow b_1\lrangle{1}  = b_2\lrangle{2}
			\quad \Theta \wedge e \lrangle{1} \geq n \Rightarrow \neg b_1\lrangle{1}\\
			\forall{k\colon\mathtt{int}}. \models c_1 \sim_{(\gamma_k,\delta_k)} c_2 \colon \Theta \wedge e\lrangle{1} = k \wedge e \lrangle{1} \leq n \implies \Theta \wedge e\lrangle{1} > k\\
		\end{array}
	$}
	\RightLabel{{[while]}}
	\UnaryInfC{$
		\begin{array}{l@{}}
			\models \mathtt{while}~b~\mathtt{do}~c_1 \sim_{(\gamma, \delta)} \mathtt{while}~b'~\mathtt{do}~c_2 \colon~\Theta\wedge b_1\langle 1 \rangle \wedge e\langle 1 \rangle \geq 0 \Rightarrow \Theta \wedge \neg b_1\langle 1 \rangle
		\end{array}
	$}
	\DisplayProof
\]
\[
	\AxiomC{$
		\begin{array}{l@{}}
			\models c_1 \sim_{(\gamma,\delta)} c_2 \colon \Psi\wedge\Theta \Rightarrow \Phi
			\quad
			\models c_1 \sim_{(\gamma,\delta)} c_2 \colon \Psi\wedge\neg\Theta \Rightarrow \Phi
		\end{array}
	$}
	\RightLabel{{[case]}}
	\UnaryInfC{$
		\models c_1 \sim_{(\gamma,\delta)} c_2 \colon \Psi \Rightarrow \Phi
	$}
	\DisplayProof
\]
\[
	\AxiomC{$
			\models c_1 \sim_{(\gamma,\delta)} c_2 \colon \Psi \Rightarrow \Phi~~
\Psi' \Rightarrow \Psi
			~~
			\Phi \Rightarrow \Phi'
	$}
	\RightLabel{{[weak]}}
	\UnaryInfC{$
		\models c_1 \sim_{(\gamma,\delta)} c_2 \colon \Psi' \Rightarrow \Phi'
	$}
	\DisplayProof
\quad
	\AxiomC{$
		\models c_1 \sim_{(\gamma,\delta)} c_2 \colon \Psi \Rightarrow \Phi
	$}
	\RightLabel{{[op]}}
	\UnaryInfC{$
		\models c_2 \sim_{(\gamma,\delta)} c_1 \colon \opposite{\Psi} \Rightarrow \opposite{\Phi}
	$}
	\DisplayProof
\]
The relational lifting $\overline{\mathcal{G}^{(\gamma, \delta)}}$ does not preserve every relation composition.
However, it preserve the composition of relations if the relations are \emph{measurable}, that is, the images and inverse images along them of mesurable sets are also measurable
(see also \cite[Section 3.3]{katsumata_et_al:LIPIcs:2015:5532}).
Generally speaking, it is difficult to check measurability of relatons, hence the continuous apRHL is weak for dealing with relation compositions.
However, we have the following two special cases:
\begin{itemize}
	\item The \emph{equality/diagonal} relation \emph{on} any space is a measurable relation.
	\item Any relation between \emph{discrete} spaces is automatically a measurable relation.
\end{itemize}
Hence, the following [comp] rule is an extension of the original [comp] rule in \cite{Barthe:2012:PRR:2103656.2103670}:
\[
	\AxiomC{$
		\begin{array}{l@{}}
			{\Phi} \text{ and } {\Phi'} \text{are measurable relations}\\
			\models c_1 \sim_{(\gamma,\delta)} c_2 \colon \Psi \Rightarrow \Phi \quad
			\models c_2 \sim_{(\gamma',\delta')} c_3 \colon \Psi' \Rightarrow \Phi'
		\end{array}
	$}
	\RightLabel{{[comp]}}
	\UnaryInfC{$
			\models c_1 \sim_{(\gamma\gamma',\min(\delta+\gamma\delta',\delta'+\gamma'\delta))} c_3 \colon \Psi\circ\Psi' \Rightarrow \Phi\circ\Phi' 
	$}
	\DisplayProof
\]
To define the [frame] rule in continuous apRHL, for any relation $\Theta$ on $\interpret{\Gamma}$, we define the following relation $\mathrm{Range}(\Theta)$:
\begin{align*}
	\lefteqn{\mathrm{Range}(\Theta)}\\
	&=\SetBraket{(\nu_1,\nu_2) | \exists{A,B \in \Sigma_{\interpret{\Gamma}}}. ( A\times B\subseteq \Theta \wedge \nu_1(A)=\nu_1(\interpret{\Gamma}) \wedge \nu_2(B)=\nu_2(\interpret{\Gamma}) )}.
\end{align*}
We define the [frame] rule with the construction $\mathrm{Range}({-})$:
\[
	\AxiomC{$
		\begin{array}{l@{}}
			\models c_1 \sim_{(\gamma,\delta)} c_2 \colon \Psi \Rightarrow \Phi \quad
			(\interpret{c_1},\interpret{c_2}) \colon \Theta \to \mathrm{Range}(\Theta)
		\end{array}
	$}
	\RightLabel{{[frame]}}
	\UnaryInfC{$
			\models c_1 \sim_{(\gamma,\delta)} c_2 \colon \Psi\wedge\Theta \Rightarrow \Phi\wedge\Theta
	$}
	\DisplayProof
\]
If $\interpret{\Gamma}$ is countable discrete then the condition $(\nu_1,\nu_2) \in \mathrm{Range}(\Theta)$ is equivalent to $\supp(\nu_1) \times \supp(\nu_2) \subseteq \Theta$, and hence the above [frame] rule is an extension of the original [frame] rule in \cite{Barthe:2012:PRR:2103656.2103670}.

Note that if the $\sigma$-algebra of the space $\interpret{\tau}$ contains all singleton subsets, and $\Theta$ does not restrict any variables in $FV(c_1) \cup FV(c_2)$ then $(\interpret{c_1},\interpret{c_2}) \colon \Theta \to \mathrm{Range}(\Theta)$.
\subsection{Soundness}\label{sec:soundness}
The soundness of the [assn] and [case] are obtained from the composition of arrows in $\BRel(\Meas)$.
The rule [skip] and [seq] are sound because $\overline{\mathcal{G}^{(\gamma,\delta)}}$ is the graded relational lifting of $\mathcal{G}\times\mathcal{G}$ along the forgetful functor $U\colon \BRel(\Meas) \to \Meas^2$. 
The rules [weak] and [op] are sound because $\overline{\mathcal{G}^{(\gamma,\delta)}}$ is monotone with respect to the inclusion order of relations, and preserves opposites of relations.
The soundness of [rand] is proved from Fubini theorem.
The soundness of [cond] is proved by case analyses.
The soundness of [while] is obtained from $\omega\mathbf{CPO}_\bot$-enrichment structure of $\SRel$.
The soundness of [comp] is given by using the measurability of the postconditions.
Finally, the [frame] rule is proved from the strucure of $\mathrm{Range}(\Theta)$.
\subsection{Mechanisms}\label{sec:mechanism}
In this part, we give a generic method to construct the rules for random samplings, and by instantiating the method we show the soundness of the proof rules in prior researches: [Lap] for Laplacian mechanism \cite{DworkMcSherryNissimSmith2006}, [Exp] for Exponential mechanism \cite{McSherry:2007:MDV:1333875.1334185}, [Gauss] for Gaussian mechanism \cite[Theorem 3.22, Theorem A.1]{DworkRothTCS-042}, and [Cauchy] for the mechanism by Cauchy distributions \cite{Nissim:2007:SSS:1250790.1250803}.

Let $f \colon X \times Y \to \mathbb{R}$ be a positive measurable function, and $\nu$ be a measure over $Y$.
We define the following function $f_a \colon \Sigma_Y \to [0,1]$ by 
\[
	f_a(B) = 
	\frac{
		\int_B f(a,-) ~d\nu
	}{
		\int_Y f(a,-) ~d\nu
	}.
\]
We remark that the function $f(a,-)\colon Y \to \mathbb{R}$ is measurable.
If the function is not `almost everywhere zero' and Lebesgue integrable, that is, $0 < \int_Y f(a,-) ~d\nu  < \infty$ then $f_a(-)$ is a \emph{probability measure}.

The following proposition, which is an extension of \cite[Lemma 7]{Barthe:2012:PRR:2103656.2103670}, plays the central role in the construction of sound proof rules for random samplings.
\begin{proposition}\label{prop:mechanism}
Let $f \colon X \times Y \to \mathbb{R}$ be a positive measurable function, and $\nu$ be a measure over $Y$.
For all $a, a' \in X$, $\gamma, \gamma' \geq 1$, $\delta \geq 0$, and $Z\in \Sigma_Y$ (window set),
if the following three conditions hold then for any $B\in \Sigma_Y$, we have $f_a(B) \leq \gamma \gamma' f_{a'}(B) + \delta$.
\begin{enumerate}
	\item $0 < \frac{1}{\gamma'} \int_Y f(a',-)~d\nu \leq \int_Y f(a,-)~d\nu < \infty$
	\item $\forall{b \in Z}. f(a,b) \leq \gamma f(a',b)$, \quad  \rm{(iii)} $f_a(Y \setminus Z) \leq \delta$.
\end{enumerate}
\end{proposition}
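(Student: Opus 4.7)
The plan is to split the set $B$ into its intersection with the ``good'' window $Z$ (where condition (ii) gives a pointwise ratio bound) and its complement (where condition (iii) gives the additive slack $\delta$), and then estimate the two pieces separately.

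First, I would write $f_a(B) = f_a(B\cap Z) + f_a(B\setminus Z)$, which is valid by the additivity of the (sub)probability measure $f_a$, and immediately bound the second term by $\delta$ using condition (iii) together with monotonicity, since $B\setminus Z \subseteq Y\setminus Z$. This reduces the goal to proving $f_a(B\cap Z)\le \gamma\gamma'\, f_{a'}(B)$.

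For the main estimate, I would expand the definition of $f_a(B\cap Z)$ as the ratio $\int_{B\cap Z} f(a,-)\,d\nu / \int_Y f(a,-)\,d\nu$. In the numerator, condition (ii) applied pointwise on $B\cap Z \subseteq Z$ combined with monotonicity of the Lebesgue integral yields $\int_{B\cap Z} f(a,-)\,d\nu \le \gamma \int_{B\cap Z} f(a',-)\,d\nu \le \gamma\int_B f(a',-)\,d\nu$. In the denominator, condition (i) provides the lower bound $\int_Y f(a,-)\,d\nu \ge (1/\gamma')\int_Y f(a',-)\,d\nu$, and the strict positivity in (i) ensures the reciprocal is well-defined, so $1/\int_Y f(a,-)\,d\nu \le \gamma'/\int_Y f(a',-)\,d\nu$. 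Multiplying the two bounds gives exactly $f_a(B\cap Z)\le \gamma\gamma'\, f_{a'}(B)$, and adding back the $\delta$ from the complement finishes the proof.

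I do not expect any serious obstacle; the only mildly delicate point is checking that the normalization denominators are finite and strictly positive so that the $f_a$'s are genuine probability measures and the reciprocal manipulations are legitimate, but this is exactly what condition (i) guarantees on both sides. The argument is a direct continuous generalization of \cite[Lemma 7]{Barthe:2012:PRR:2103656.2103670}, with the integral replacing the sum and with measurability of $f(a,-)$ (hence of $f(a,-)\mathbf{1}_B$ for $B\in\Sigma_Y$) ensuring that all integrals considered are well defined.
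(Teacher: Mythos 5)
Your proof is correct and follows essentially the same route as the paper's own argument: split $B$ into $B\cap Z$ and $B\setminus Z$, bound the latter by $\delta$ via (iii), and bound the former by applying (ii) pointwise in the numerator and (i) in the denominator before enlarging $B\cap Z$ to $B$. The only point worth noting is already handled by your remark on condition (i), which indeed guarantees both normalizations are finite and strictly positive.
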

\paragraph{Laplacian mechanism \cite{DworkMcSherryNissimSmith2006}.}
We give the function $f \colon \mathbb{R} \times \mathbb{R} \to\mathbb{R}$ by
$f(a,b) = \frac{2}{\sigma}\exp(\frac{-|b - a|}{\sigma})$,
where $\sigma>0$ is the variance of Laplacian mechanism.
We introduce the probabilistic operation $\mathtt{Lap}_\sigma \colon \mathtt{real} \to \mathtt{real}$ with $\interpret{\mathtt{Lap}_\sigma} = f_{(-)}$,
whose measurability is shown from the continuity of the mapping $a \mapsto \int_{\alpha}^{\beta} f(a,x) dx$ ($\alpha, \beta\in\mathbb{R}$).

We show $( f_{(-)}, f_{(-)}) \colon \SetBraket{(a,a') | \left|a - a'\right| < r} \to \overline{\mathcal{G}^{(\exp(\frac{r}{\sigma}),0)}}\mathrm{Eq}_{\mathbb{R}}$
by instantiating Proposition \ref{prop:mechanism} as follows:
If $|a - a'| < r$ then
the following parameters satisfy the conditions (i)--(iii):
$\gamma = \exp(r / \sigma)$, $\gamma' = 1$, $\delta = 0$, 
the function $f$,
the Lebesgue measure $\nu$ over $\mathbb{R}$, and the window $Z = \mathbb{R}$.
This implies $( f_{(-)}, f_{(-)}) \colon \SetBraket{(a,a') | \left|a - a'\right| < r} \to \overline{\mathcal{G}^{(\exp(\frac{r}{\sigma}),0)}}\mathrm{Eq}_{\mathbb{R}}$ since $\SetBraket{(a,a') | \left|a - a'\right| < r}$ and $\mathrm{Eq}_{\mathbb{R}}$ are symmetric.

From the [rand] rule, the following rule is proved:
\[
	\AxiomC{$
			\Gamma \vdash^t e_1 \colon \mathtt{real} \quad \Gamma \vdash^t e_2 \colon \mathtt{real} \quad m_1 \Psi m_2 \Rightarrow |\interpret{e_1} m_1 - \interpret{e_2} m_2| < r
	$}
	\RightLabel{{[Lap]}}
	\UnaryInfC{$
		\models x  \xleftarrow{\$} \mathtt{Lap}_\sigma(e_1) \sim_{(\exp(\frac{r}{\sigma}),0)}  y  \xleftarrow{\$} \mathtt{Lap}_\sigma(e_2) \colon \Psi \Rightarrow x\lrangle{1} = y\lrangle{2}
	$}
	\DisplayProof
\]
\paragraph{Exponential mechanism \cite[Modified]{McSherry:2007:MDV:1333875.1334185}.}
Let $D$ be the discrete Euclidian space $\mathbb{Z}^n$, and $(R,\nu)$ be a (positive) measure space.
Let $q \colon D \times R \to \mathbb{R}$ be a measurable function such that $\sup_{b \in R} |q(a,b)-q(a',b)| \leq  c \cdot ||a - a'||_1$ for some $c > 0$.
Suppose $0 < \int_{R} \exp(\varepsilon q(a,-))~d\nu < \infty$ for any $a \in D$.
We give the function $f \colon D \times R \to\mathbb{R}$ by $f(a,b) = \exp(\varepsilon q(a,b))$, where $\varepsilon>0$ is a constant.
We add the value types $\mathtt{D}$ and $\mathtt{R}$ with $\interpret{\mathtt{D}}^\Gamma = D$ and $\interpret{\mathtt{R}}^\Gamma = R$ to pWHILE, and introduce the probabilistic operation $\mathtt{Exp}_{\lrangle{q,\nu, \varepsilon}} \colon \mathtt{D} \to \mathtt{R}$ with $\interpret{\mathtt{Exp}_{\lrangle{q,\nu, \varepsilon}}} = f_{(-)}$.

We show $( f_{(-)}, f_{(-)}) \colon \SetBraket{(a,a') | \left|| a - a'\right||_1 < r} \to \overline{\mathcal{G}^{(\exp(2\varepsilon r c),0)}}\mathrm{Eq}_{R}$
by instantiating Proposition \ref{prop:mechanism} as follows:
Suppose $||a - a'||_1 < r$.
The following parameters then satisfy the conditions (i)--(iii):
$\gamma = \gamma' = \exp(\varepsilon r c)$, $\delta = 0$,
the function $f$,
the given measure $\nu$,
and the window $Z = R$.

From the [rand] rule, the following rule is proved:
\[
	\AxiomC{$
		\begin{array}{l@{}}
			\Gamma \vdash^t e_1 \colon \mathtt{D} \quad \Gamma \vdash^t e_2 \colon \mathtt{D} \quad
			m_1 \Psi m_2 \Rightarrow  || \interpret{e_1} m_1 - \interpret{e_2} m_2||_1 < r
		\end{array}
	$}
	\RightLabel{{[Exp]}}
	\UnaryInfC{$
		\models x \xleftarrow{\$} \mathtt{Exp}_{\lrangle{q,\nu, \varepsilon}} (e_1) \sim_{(\exp(2\varepsilon r c),0)}  y  \xleftarrow{\$} \mathtt{Exp}_{\lrangle{q,\nu, \varepsilon}} (e_2) \colon \Psi \Rightarrow x\lrangle{1} = y\lrangle{2}
	$}
	\DisplayProof
\]
\paragraph{Gaussian mechanism {\cite[Theorem 3.22, Theorem A.1]{DworkRothTCS-042}}.}
We give the function $f \colon \mathbb{R} \times \mathbb{R} \to\mathbb{R}$ by
$f(a,b) = \frac{1}{\sqrt{2\pi\sigma^2}} \exp(-\frac{(b - a)^2}{2\sigma^2})$,
where $\sigma>0$ is the variance of Gaussian mechanism.
We introduce the probabilistic operation $\mathtt{Gauss}_\sigma \colon \mathtt{real} \to \mathtt{real}$ with $\interpret{\mathtt{Gauss}_\sigma} = f_{(-)}$, whose continuity is easily proved.

We obtain $( f_{(-)}, f_{(-)}) \colon \SetBraket{(a,a') | \left|a - a'\right| < r} \to \overline{\mathcal{G}^{(\gamma,\delta)}}\mathrm{Eq}_{\mathbb{R}}$
by instantiating Proposition \ref{prop:mechanism} as follows:
If $|a - a'| < r$, $1 < \gamma < \exp(1)$, and $\gamma'  = 1$ hold, 
and there is $(3/2) < c$ such that
$2\log(1.25/\delta) \leq c^2$ and $({c r}/{\log\gamma}) \leq \sigma$,
then the parameters $\gamma$, $\gamma'$, and $\delta$,
the function $f$, and
the Lebesgue measure $\nu$ over $\mathbb{R}$
satisfy the conditions (i)--(iii) for the window
$Z = \SetBraket{b | \left|b - (a+a')/2\right| \leq (\sigma^2\log\gamma/r)}$.

From the [rand] rule, we obtain the following rule:
\[
	\AxiomC{$
		\begin{array}{l@{}}
			\exists{c > \frac{3}{2}}.~ ({2\log(\frac{1.25}{\delta}) < c^2}~ \wedge~ {\frac{cr}{\gamma}\leq\sigma}) \quad 1 < \gamma < \exp(1)\\
			\Gamma \vdash^t e_1 \colon \mathtt{real} \quad \Gamma \vdash^t e_2 \colon \mathtt{real} \quad
			m_1 \Psi m_2 \Rightarrow |\interpret{e_1} m_1 - \interpret{e_2} m_2| < r
		\end{array}
	$}
	\RightLabel{{[Gauss]}}
	\UnaryInfC{$
		\models x  \xleftarrow{\$} \mathtt{Gauss}_\sigma(e_1) \sim_{(\gamma,\delta)}  y  \xleftarrow{\$} \mathtt{Gauss}_\sigma(e_2) \colon \Psi \Rightarrow x\lrangle{1} = y\lrangle{2}
	$}
	\DisplayProof
\]
We can relax the above conditions for $c$ to $((1+\sqrt{3})/2) < c$ and $2\log(0.66/\delta) < c^2$ by changing the window $Z$ to
$\SetBraket{b | b \leq (a+a')/2 +  (\sigma^2\log\gamma/r)}$ when $a \leq a'$ and 
$\SetBraket{b | b \geq (a+a')/2 -  (\sigma^2\log\gamma/r)}$ when $a' \leq a$.
\paragraph{Mechanism of Cauchy distributions {\cite{Nissim:2007:SSS:1250790.1250803}}}
We give the function $f \colon \mathbb{R} \times \mathbb{R} \to\mathbb{R}$ by $f (a,b) = \frac{\rho}{\pi ((a-b)^2 + \rho^2)}$.
We introduce the probabilistic operation $\mathtt{Cauchy}_\rho \colon \mathtt{real} \to \mathtt{real}$ with $\interpret{\mathtt{Cauchy}_\rho(e) }^\Gamma m = f_{(-)}$, whose continuity is easily proved.

Let $\gamma = 1+\frac{r^2+r\sqrt{r^2+4\rho^2}}{2\rho^2}$.
We obtain $( f_{(-)}, f_{(-)}) \colon \SetBraket{(a,a') | \left|a - a'\right| < r} \to \overline{\mathcal{G}^{(\gamma,0)}}\mathrm{Eq}_{\mathbb{R}}$
by instantiating Proposition \ref{prop:mechanism} as follows:
If $|a - a'| < r$ then the parameters satisfy the conditions (i)--(iii): $\gamma$, $\gamma' = 1$, $\delta = 0$, the Lebesgue measure $\nu$ over $\mathbb{R}$, and the window $Z = \mathbb{R}$.

From the [rand] rule, we obtain the following rule:
\[
	\AxiomC{$
		\begin{array}{l@{}}
			\Gamma \vdash^t e \colon \mathtt{real} \quad
			m_1 \Psi m_2 \Rightarrow |\interpret{e_1} m_1 - \interpret{e_2} m_2| < r
		\end{array}
	$}
	\RightLabel{{[Cauchy]}}
	\UnaryInfC{$
		\models x  \xleftarrow{\$} \mathtt{Cauchy}_\rho(e_1) \sim_{(\gamma,0)}  y  \xleftarrow{\$} \mathtt{Cauchy}_\rho(e_1)  \colon \Psi \Rightarrow \inverse{(\pi_x \times \pi_y)}(\mathrm{Eq}_{\mathbb{R}})
	$}
	\DisplayProof
\]

\algrenewcommand\algorithmicdo{$\mathtt{do}$}
\algrenewcommand\algorithmicend{}
\algrenewcommand\algorithmicwhile{$\mathtt{while}$}
\algrenewcommand\algorithmicif{$\mathtt{if}$}
\algrenewcommand\algorithmicthen{$\mathtt{then}$}
\algrenewcommand\algorithmicelse{$\mathtt{else}$}
\algrenewcommand\algorithmicfor{$\mathtt{for}$}
\algrenewcommand\algorithmicprocedure{}
\section{An Example: The Above Threshold Algorithm}\label{sec:example}
Barthe, Gaboardi, Gr{\'e}goire, Hsu, and Strub extended the logic apRHL to the logic apRHL+ with new proof rules to describe the \emph{sparse vector technique} (see also \cite[Section 3.6]{DworkRothTCS-042}).
They gave a formal proof of the differential privacy of \emph{above threshold algorithm} in the preprint \cite{2016arXiv160105047B} in arXiv.

In this section, we demonstrate that the above threshold algorithm with \emph{real-valued queries} is proved with \emph{almost the same proof} as in \cite{2016arXiv160105047B}.
The new proof rules of apRHL+ are still sound in the framework of the continuous apRHL.

We consider the following algorithm $\mathtt{AboveT}$:
\begin{algorithm}
	\caption{The Above Threshold Algorithm (\cite{2016arXiv160105047B}, Modified)}\label{alg:above}
\begin{algorithmic}[1]
    \Procedure{$\mathtt{AboveT}$}{$T \colon \mathtt{real}$, $Q \colon \mathtt{queries}$, $d \colon \mathtt{data}$}
    	\State{
			$j \leftarrow 1$;
			$r \leftarrow |Q| + 1$;
			$T \xleftarrow{\$} \mathtt{Lap}_{\varepsilon/2}(t)$;
		}
            \While{$j < |Q|$}
				\State{$S \xleftarrow{\$} \mathtt{Lap}_{\varepsilon/4}(\mathtt{eval}(Q, i, d))$;}
				\If{$T \leq S \wedge r = |Q| + 1$}
				\State{$r \leftarrow j$;
					}
				\EndIf
                \State{$j \leftarrow  j + 1$}
            \EndWhile
    \EndProcedure
\end{algorithmic}
\end{algorithm}

We recall the setting of this algorithm.
This algorithm has two fixed parameters: the threshold $t \colon \mathtt{real}$ and
the set $Q\colon\mathtt{queries}$ of queries where $|Q| \colon \mathtt{int}$ is the number of $Q$.
The input variable is $d\colon\mathtt{int}$, and the output variable is $r \colon \mathtt{int}$.
We prepare the new value types $\mathtt{queries}$ and $\mathtt{data}$ with $\interpret{\mathtt{data}}=\mathbb{R}^N$ and $\mathtt{queries} =\mathtt{int}$ (alias), and the typings $j \colon \mathtt{int}$, $T \colon \mathtt{real}$, and $S \colon \mathtt{real}$.
We assume that an operation $\mathtt{eval} \colon (\mathtt{queries},\mathtt{int}, \mathtt{data}) \to \mathtt{real}$ is given for evaluating $i$-th query in $Q$ for the input $d$.
We require $\interpret{\mathtt{eval}}$ to be \emph{$1$-sensitivity} for the data $d$, that is, $|| d - d' ||_1 \leq 1 \Rightarrow  |\interpret{\mathtt{eval}}(Q, i, d) - \interpret{\mathtt{eval}}(Q, i, d')| \leq 1$.

The differential privacy of $\mathtt{Above}$ is characterised as follows:
\[
\models \mathtt{AboveT} \sim_{\exp(\varepsilon),0}  \mathtt{AboveT} \colon
|| d \langle 1\rangle - d \langle 2 \rangle ||_1 \leq 1 \Rightarrow r \langle 1\rangle = r \langle 2 \rangle.
\]
The following rules in apRHL+ are sound in the framework of continuous apRHL:
\[
	\AxiomC{$
	\forall{i \colon\mathtt{int}}. \models c_1\sim_{\left(\gamma,\delta_i \right)} c_2
	\colon \Psi \Rightarrow (x \langle 1 \rangle  = i \Rightarrow  x \langle 2 \rangle = i)
	\quad
	\sum_{i \colon\mathtt{int}}\interpret{\delta_i} = \delta
	$}
	\RightLabel{{[Forall-Eq]}}
	\UnaryInfC{$
		\models c_1\sim_{\left(\gamma,\delta\right)} c_2
		\colon \Psi \Rightarrow x \langle 1 \rangle = x \langle 2 \rangle
	$}
	\DisplayProof
\]
\[
	\AxiomC{$
		\begin{array}{l@{}}
			\Gamma \vdash^t e_1 \colon \mathtt{real} \quad \Gamma \vdash^t e_2 \colon \mathtt{real} \quad
			m_1 \Psi m_2 \Rightarrow |\interpret{e_1} m_1 + r' - \interpret{e_2} m_2| < r
		\end{array}
	$}
	\RightLabel{{[LapGen]}}
	\UnaryInfC{$
		\models x \xleftarrow{\$} \mathtt{Lap}_\sigma(e_1) \sim_{(\exp(\frac{r}{\sigma}),0)}  y  \xleftarrow{\$} \mathtt{Lap}_\sigma(e_2) \colon \Psi \Rightarrow x \langle 1 \rangle + r' = y \langle 2 \rangle
	$}
	\DisplayProof
\]
\[
	\AxiomC{$
			\Gamma \vdash^t e_1 \colon \mathtt{real} \quad \Gamma \vdash^t e_2 \colon \mathtt{real} \quad
			\quad x \notin FV(e_1) \quad y \notin FV(e_2)
	$}
	\RightLabel{{[LapNull]}}
	\UnaryInfC{$
		\models x \xleftarrow{\$} \mathtt{Lap}_\sigma(e_1) \sim_{(1,0)}  y  \xleftarrow{\$} \mathtt{Lap}_\sigma(e_2) \colon \Psi \Rightarrow x \langle 1 \rangle - y \langle 2 \rangle =  e_1 \langle 1 \rangle - e_2 \langle 2 \rangle
	$}
	\DisplayProof
\]
Hence we extend the contiuous apRHL by adding these rules, and therefore we construct a formal proof almost the same proof as in \cite{2016arXiv160105047B} in the extended continous apRHL.

The soundness of the rule [Forall-Eq] is proved from the following lemma:
\begin{lemma}[{\cite[Proposition 6]{2016arXiv160105047B}, Modified}]\label{lem:forallEq}
If $x \colon \tau$ and the space $\interpret{\tau}$ is countable discrete then
\[
{\bigcap_{i \in \interpret{\tau}}\mathcal{G}^{(\gamma,\delta_i)} (x\langle 1\rangle = i \Rightarrow x\langle 2\rangle = i)}\subseteq{\mathcal{G}^{(\gamma,\sum_{i \in \interpret{\tau}}\delta_i)} (x\langle 1\rangle = x\langle 2\rangle)}.
\]
\end{lemma}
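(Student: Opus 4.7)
The plan is to pick an arbitrary $(\nu_1,\nu_2)$ in the left-hand intersection and verify the defining inequality of $\mathcal{G}^{(\gamma,\sum_i\delta_i)}\Phi$ for $\Phi := (x\langle 1\rangle = x\langle 2\rangle)$ by slicing $\interpret{\Gamma}$ according to the value of the variable $x$. Concretely, fix measurable $A, B \subseteq \interpret{\Gamma}$ with $\Phi(A) \subseteq B$; the goal is $\nu_1(A) \le \gamma \nu_2(B) + \sum_i\delta_i$.

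Because $\interpret{\tau}$ is countable and discrete, the sets $F_i := \{m \in \interpret{\Gamma} \mid m(x) = i\}$ are measurable for every $i \in \interpret{\tau}$ and partition $\interpret{\Gamma}$ into countably many measurable pieces. Set $A_i := A \cap F_i$ and $B_i := B \cap F_i$, so $A = \bigsqcup_i A_i$ and $B = \bigsqcup_i B_i$ measurably. The crux of the argument is to verify, for each $i$, that $R_i(A_i) \subseteq B_i$, where $R_i := (x\langle 1\rangle = i \Rightarrow x\langle 2\rangle = i)$: if $m_1 \in A_i$ then $m_1(x) = i$, so any $m_2$ with $(m_1,m_2) \in R_i$ must satisfy $m_2(x) = i$, placing $m_2 \in F_i$ and moreover forcing $(m_1,m_2) \in \Phi$; combining $m_1 \in A$ with $\Phi(A) \subseteq B$ then gives $m_2 \in B$, and hence $m_2 \in B_i$.

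With this containment in hand, the hypothesis $(\nu_1,\nu_2) \in \mathcal{G}^{(\gamma,\delta_i)}R_i$ applied to the pair $(A_i,B_i)$ yields $\nu_1(A_i) \le \gamma\nu_2(B_i) + \delta_i$ for every $i$. Summing over the countable partition and invoking countable additivity of the sub-probability measures $\nu_1$ and $\nu_2$ yields
\[
\nu_1(A) = \sum_i \nu_1(A_i) \le \sum_i \bigl(\gamma\nu_2(B_i) + \delta_i\bigr) = \gamma \nu_2(B) + \sum_i \delta_i,
\]
which is precisely the membership $(\nu_1,\nu_2) \in \mathcal{G}^{(\gamma,\sum_i\delta_i)}\Phi$ required. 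The only nontrivial step is the inclusion $R_i(A_i) \subseteq B_i$, but this is exactly where having $m_1 \in F_i$ activates the premise of the implication defining $R_i$ and upgrades $R_i$-relatedness on the slice $A_i$ to $\Phi$-relatedness; I do not foresee any further obstacle beyond identifying this partition-by-$x$ trick.
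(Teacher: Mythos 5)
Your proof is correct and follows essentially the same route as the paper: partition the memory space by the value of $x$ into the countably many measurable slices, apply the hypothesis $(\nu_1,\nu_2)\in\mathcal{G}^{(\gamma,\delta_i)}R_i$ slice-wise, and sum using countable additivity. The only cosmetic difference is that you slice $B$ as well (verifying $R_i(A_i)\subseteq B_i$), whereas the paper takes the whole slice $\{i\}\times\interpret{\Gamma}$ as the target set and then compares $\nu_2(\Phi(A))$ with $\nu_2(B)$; both are equally valid.
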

The soundness of the rule [LapGen] is proved from the rules [Lap] and [assn] and the semantically equivalence $\interpret{x \xleftarrow{\$} \mathtt{Lap}_\sigma(e + r'); x \leftarrow x - r'} =  \interpret{x \xleftarrow{\$} \mathtt{Lap}_\sigma(e)}$.
The soundness of [LapNull] is proved by using the [LapGen] and [Frame] rules.
\paragraph{Formal Proof}
We now demonstrate that the $(\varepsilon,0)$-differential privacy of algorithm $\mathtt{AboveT}$ is proved with almost the same proof as in \cite{2016arXiv160105047B}.

From the [Forall-Eq] rule with variable $r$, it suffices to prove for all integer $i$, 
\[
\models \mathtt{AboveT} \sim_{\exp(\varepsilon),0}  \mathtt{AboveT} \colon
|| d \langle 1\rangle - d \langle 2 \rangle ||_1
\leq 1
\Rightarrow (r \langle 1\rangle = i \Rightarrow  r \langle 2 \rangle = i).
\]
We denote by $c_0$ the sub-command consisting of the initialization line 2 of $\mathtt{AboveT}$.
From the rules [assn], [LapGen] rule with $r = r' = 1$, and $\sigma= 2/\varepsilon$, [seq], and [frame] we obtain
\[
\models c_0 \sim_{\exp(\varepsilon/2),0}  c_0 \colon || d \langle 1\rangle - d \langle 2 \rangle ||_1 \leq 1 \Rightarrow || d \langle 1\rangle - d \langle 2 \rangle ||_1 \leq 1 \wedge \Psi.
\]
where
\[
\Psi = {T \langle 1\rangle + 1 = T \langle 2 \rangle}
	\wedge {j \langle 1\rangle = j \langle 2 \rangle} \wedge {j \langle 1\rangle = 1}
	\wedge {r \langle 1\rangle = r \langle 2\rangle} \wedge {r \langle 1\rangle = |Q| +1}.
\]
We denote by $c_1$ and $c_2$ the main loop and the body of the main loop respectively
(i.e. $c_1 = \mathtt{while}~(j < |Q|)~\mathtt{do}~c_2$).
We aim to prove the following judgement by using the [while] rule:
\begin{align*}
\models c_1 \sim_{\exp(\varepsilon/2),0}  c_1 \colon  (|| d \langle 1\rangle - d \langle 2 \rangle ||_1 \leq 1 \wedge \Psi) \Rightarrow (r \langle 1\rangle = i \Rightarrow r \langle 2 \rangle =i).
\end{align*}
To prove this, it suffices to show the following cases for the loop body $c_2$:
\begin{enumerate}
\item \label{enum:loop:before}
If $k < i$ then $\models c_2 \sim_{1,0}  c_2 \colon (\Theta \wedge j\langle 1 \rangle = k) \Rightarrow (\Theta \wedge j\langle 1 \rangle > k)$
\item \label{enum:loop:justnow}
If $k = i$ then $\models c_2 \sim_{\exp(\varepsilon/2),0}  c_2 \colon (\Theta \wedge j\langle 1 \rangle = k) \Rightarrow (\Theta \wedge j\langle 1 \rangle > k)$
\item \label{enum:loop:after}
If $k > i$ then $\models c_2 \sim_{1,0}  c_2 \colon (\Theta \wedge j\langle 1 \rangle = k) \Rightarrow (\Theta \wedge j\langle 1 \rangle > k)$
\end{enumerate}
Here, we provide the following \emph{loop invariant} as follows: 
\begin{align*}
\Theta
=
&(j\langle 1 \rangle < i \Rightarrow ((r \langle 1\rangle = |Q| + 1 \Rightarrow r \langle 2 \rangle =|Q| + 1)\wedge (r \langle 1\rangle = |Q| + 1  \vee r \langle 1\rangle < i)))
\\
&\wedge (j\langle 1 \rangle \geq i \Rightarrow (r \langle 1\rangle = i \Rightarrow r \langle 2 \rangle =i))\\
&\wedge || d \langle 1\rangle - d \langle 2 \rangle ||_1 \leq 1
\wedge T\langle 1 \rangle + 1 = T\langle 2 \rangle
\wedge j\langle 1 \rangle = j\langle 2 \rangle
\end{align*}
The judgement in the case (\ref{enum:loop:before}) is proved from the rules [seq], [assn], [cond], and [frame] and the following fact obtained from the [LapNull] rule:
\begin{align*}
\models
& S \xleftarrow{\$} \mathtt{Lap}_{\varepsilon/4}(\mathtt{eval}(Q, i, d)) \sim_{1,0} S \xleftarrow{\$} \mathtt{Lap}_{\varepsilon/4}(\mathtt{eval}(Q, i, d))\colon\\
& (|| d \langle 1\rangle - d \langle 2 \rangle ||_1 \leq 1)
\wedge (T\langle 1 \rangle + 1 = T\langle 2 \rangle)
\Rightarrow 
((S\langle 1 \rangle < T\langle 1 \rangle) \Rightarrow (S\langle 2 \rangle < T\langle 2 \rangle)).
\end{align*}
The case (\ref{enum:loop:justnow}) is proved from the rules [seq], [assn], [cond], and [frame] and the following fact obtained from the [LapGen] rule:
\begin{align*}
\models
& S \xleftarrow{\$} \mathtt{Lap}_{\varepsilon/4}(\mathtt{eval}(Q, i, d)) \sim_{\exp(\varepsilon/2),0} S \xleftarrow{\$} \mathtt{Lap}_{\varepsilon/4}(\mathtt{eval}(Q, i, d))\colon\\
&  (|| d \langle 1\rangle - d \langle 2 \rangle ||_1 \leq 1
\wedge T\langle 1 \rangle + 1 = T\langle 2 \rangle)
\Rightarrow
(S\langle 1 \rangle + 1 = S\langle 2 \rangle \wedge T\langle 1 \rangle + 1 = T\langle 2 \rangle).
\end{align*}
The case (\ref{enum:loop:after}) is proved in the similar way as (\ref{enum:loop:before}).
\begin{ack}
The author thanks Shin-ya Katsumata for many valuable comments and stimulating discussions, 
Marco Gaboardi for helpful suggestions and the introduction of his preprint \cite{2016arXiv160105047B} in arXiv, Masahito Hasegawa, Naohiko Hoshino, and Takeo Uramoto for advices that contributed to improve the writing of this paper.
\end{ack}
\bibliographystyle{plain}
\bibliography{reference}
\newpage
\appendix
This appendix will be deleted from the final version of this paper.
\section{Appendix}
We show some omitted proofs in this paper.
\subsection{Proofs in Section \ref{sec:subgiry}}
\begin{proposition}
The composition of the category $\SRel = \Meas_{\mathcal{G}}$ is continuous with respect to the ordering $\sqsubseteq$.
\end{proposition}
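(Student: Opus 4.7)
The plan is to reduce continuity of Kleisli composition in $\SRel$ to the two facts about Lebesgue integration already listed in the excerpt: that $\int_X f \, d(\sup_n \nu_n) = \sup_n \int_X f \, d\nu_n$ for increasing chains of subprobability measures, and the monotone convergence theorem $\int_X \sup_n f_n \, d\nu = \sup_n \int_X f_n \, d\nu$. Unfolding definitions, for $f,f' \colon X \to Y$ and $g,g' \colon Y \to Z$ in $\SRel$ we have $(g \circ f)(x)(C) = \int_Y g({-})(C)\, d(f(x))$, and the order $f \sqsubseteq f'$ means $f(x)(B) \leq f'(x)(B)$ pointwise. From monotonicity of the integral it is immediate that Kleisli composition is monotone in both arguments separately.

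I would then prove separate continuity in each variable. Fix $g \colon Y \to Z$ and an $\omega$-chain $f_0 \sqsubseteq f_1 \sqsubseteq \cdots$ in $\SRel(X,Y)$. For each $x \in X$ the measures $\{f_n(x)\}_n$ form an $\omega$-chain in $\mathcal{G}Y$ with supremum $(\sup_n f_n)(x)$, so the first fact yields
\[
(g \circ \sup_n f_n)(x)(C) = \int_Y g({-})(C)\, d(\sup_n f_n(x)) = \sup_n \int_Y g({-})(C)\, d(f_n(x)) = \sup_n (g \circ f_n)(x)(C).
\]
Symmetrically, fixing $f \colon X \to Y$ and an $\omega$-chain $g_0 \sqsubseteq g_1 \sqsubseteq \cdots$ in $\SRel(Y,Z)$, the monotone convergence theorem gives
\[
((\sup_n g_n) \circ f)(x)(C) = \int_Y \sup_n g_n({-})(C)\, d(f(x)) = \sup_n \int_Y g_n({-})(C)\, d(f(x)) = \sup_n (g_n \circ f)(x)(C).
\]

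Finally I would combine the two separate continuities to get joint continuity, which is what is needed for $\wCPO_\bot$-enrichment with respect to the cartesian monoidal structure. Given chains $\{f_n\}_n$ and $\{g_n\}_n$, the doubly-indexed family $\{g_m \circ f_n\}_{m,n}$ is monotone in each index by monotonicity of composition, hence $\sup_{m,n} g_m \circ f_n = \sup_n g_n \circ f_n$; combining this with the two separate-variable identities yields $(\sup_n g_n) \circ (\sup_n f_n) = \sup_n (g_n \circ f_n)$. The only subtlety is checking that the relevant integrands are in the right shape to apply the two integration lemmas pointwise in $x$ and setwise in $C$; this is routine since $g({-})(C) \colon Y \to [0,1]$ is measurable by definition of $\Sigma_{\mathcal{G}Y}$ and each $g_n({-})(C)$ is nonnegative and bounded, so monotone convergence applies verbatim. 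No further obstacles arise.
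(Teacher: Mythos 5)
Your proposal is correct and follows essentially the same route as the paper's own proof: separate continuity of Kleisli composition in each argument, obtained by unfolding $(g^\sharp \circ f)(x)(C)$ as an integral and applying, respectively, the fact that integration commutes with suprema of $\omega$-chains of subprobability measures and the monotone convergence theorem. Your additional diagonal argument upgrading separate to joint continuity is a harmless (and welcome) extra step that the paper leaves implicit.
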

\begin{proof}
Consider a measurable function $h \colon Y \to \mathcal{G}Z$ and an $\omega$-chain
$\{f_n \colon X \to \mathcal{G}Y\}_n$ with respect to $\sqsubseteq$.
We fix $x \in X$.
Since the $\omega$-chain of measures $f_n(x)$ are bounded, and hence it conveges strongly $(\sup_n f_n)(x)$.
This implies that, from the definition of Lebesgue integral, for any $C \in\Sigma_Z$ and $x \in X$, we obtain
\begin{align*}
	(h^\sharp \circ \sup_n f_n)(x)(C)
	&=  (h^\sharp (\sup_n f_n)(x))(C)\\
	&= \int_Y h({-})(C)~d((\sup_n f_n)(x))\\
	&= \sup_n \int_Y h({-})(C)~d(f_n (x))\\
	&= \sup_n (h^\sharp \circ f_n)(x)(C).
\end{align*}
Consider a measurable function $h' \colon X \to \mathcal{G}Y$ and an $\omega$-chain
$\{f_n \colon Y \to \mathcal{G}Z\}_n$ with respect to $\sqsubseteq$.
From the monotone convergence theorem,
for any $C \in\Sigma_Z$ and $x \in X$, we have
\begin{align*}
	(\sup_n f_n)^\sharp \circ h'(x)(C)
	&=  (h^\sharp (\sup_n f_n)(x))(C)\\
	&= \int_Y \sup_n f_n({-})(C)~d(h'(x))\\
	&= \sup_n \int_Y f_n({-})(C)~d(h'(x))\\
	&= \sup_n(f_n^\sharp \circ h')(x)(C).
\end{align*}
\end{proof}
\begin{lemma}
If $f_1,f_2 \colon X \to \mathcal{G}Y$ satisfy $f_1 \sqsubseteq f_2$ then
$f_1 - f_2$ defined by
\[
(f_1 - f_2)(x)(B) = f_1(x)(B) - f_2(x)(B) \quad (\text{for all }x \in X, B \in \Sigma_Y)
\]
is a measurable function $X \to \mathcal{G}Y$.
\end{lemma}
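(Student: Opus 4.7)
The statement as written contains what appears to be a small typo: since $f_1 \sqsubseteq f_2$ means $f_1(x)(B) \le f_2(x)(B)$, the pointwise difference that can conceivably be a subprobability measure is $f_2 - f_1$, not $f_1 - f_2$. I will read the lemma in this corrected form; the argument for the typo'd version is identical up to a sign.

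The plan is to prove the lemma in two stages: first that $(f_2 - f_1)(x)$ is a subprobability measure on $Y$ for every $x \in X$, and then that the assignment $x \mapsto (f_2-f_1)(x)$ is measurable into $\mathcal{G}Y$. For the first stage I would fix $x \in X$, observe $(f_2 - f_1)(x)(\emptyset) = 0$ trivially, and prove countable additivity by taking a disjoint family $\{B_n\} \subseteq \Sigma_Y$ and rewriting
\[
(f_2-f_1)(x)\bigl(\bigcup_n B_n\bigr) = f_2(x)\bigl(\bigcup_n B_n\bigr) - f_1(x)\bigl(\bigcup_n B_n\bigr) = \sum_n f_2(x)(B_n) - \sum_n f_1(x)(B_n).
\]
Both series are absolutely convergent because the partial sums are bounded by $1$, so they may be recombined termwise into $\sum_n (f_2 - f_1)(x)(B_n)$. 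The total mass $(f_2-f_1)(x)(Y) = f_2(x)(Y) - f_1(x)(Y)$ sits in $[0,1]$ by the subprobability assumption on $f_2$ and the ordering assumption $f_1 \sqsubseteq f_2$.

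For the second stage I would invoke the defining property of $\Sigma_{\mathcal{G}Y}$ recalled in the preliminaries: it is the coarsest $\sigma$-algebra making each evaluation map $\mathrm{ev}_B \colon \mathcal{G}Y \to [0,1]$ measurable. Therefore it suffices to check that for every $B \in \Sigma_Y$ the composite
\[
x \mapsto \mathrm{ev}_B\bigl((f_2 - f_1)(x)\bigr) = (\mathrm{ev}_B \circ f_2)(x) - (\mathrm{ev}_B \circ f_1)(x)
\]
is measurable $X \to [0,1]$. But this is the difference of two measurable real-valued functions (each $\mathrm{ev}_B \circ f_i$ is measurable because $f_i \in \SRel(X,Y)$), and subtraction $\RR \times \RR \to \RR$ is continuous hence Borel measurable; so the composite is measurable.

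There is no genuine obstacle in this argument; the only subtlety is the conceptual one of using the correct universal property of $\Sigma_{\mathcal{G}Y}$ rather than trying to manipulate measures on $\mathcal{G}Y$ directly. If one wanted to be fastidious one might note that to reduce measurability of $f_2 - f_1$ valued in $\mathcal{G}Y$ to measurability of all $\mathrm{ev}_B$-composites one is implicitly using the universal property of the initial $\sigma$-algebra, which is exactly the definition in the sub-Giry paragraph.
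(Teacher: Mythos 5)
Your proof is correct, and you are right to flag the sign issue: with $f_1 \sqsubseteq f_2$ the pointwise difference that lands in $\mathcal{G}Y$ is $f_2 - f_1$; the paper's own proof silently computes with $f_1 - f_2$ without comment, so your corrected reading is the intended one. Your first stage coincides with the paper's: countable additivity of $(f_2-f_1)(x)$ via finiteness of the two measures (so the series may be split and recombined), together with the bound on the total mass. For measurability your route is genuinely different in technique, though it rests on the same generators. You invoke the universal property of the initial $\sigma$-algebra $\Sigma_{\mathcal{G}Y}$, reducing measurability of $f_2 - f_1$ to measurability of each composite $\mathrm{ev}_B \circ (f_2 - f_1) = \mathrm{ev}_B \circ f_2 - \mathrm{ev}_B \circ f_1$, which is a difference of measurable $[0,1]$-valued functions and hence measurable. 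The paper instead argues by hand on the generating sets $\{\nu \mid \nu(A) \le \alpha\}$ with $\alpha \in [0,1]\cap\mathbb{Q}$: writing $A_i^{\beta}$ for the preimage under $f_i$ of $\{\nu \mid \nu(A) \le \beta\}$, it exhibits the preimage of a generator under the difference as the countable intersection $\bigcap_{\beta\in[0,1]\cap\mathbb{Q}}\bigl((X \setminus A_2^{\beta}) \cup A_1^{\min(1,\alpha+\beta)}\bigr)$ --- in effect re-deriving, for this particular pair, the standard fact that a difference of measurable real-valued functions is measurable. Your version buys brevity and relies only on standard facts (the initial-$\sigma$-algebra characterisation stated in the preliminaries plus Borel measurability of subtraction); the paper's version buys self-containedness, staying inside explicit set manipulations. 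Both arguments are sound, and neither has a gap beyond the shared need to note nonnegativity of $(f_2-f_1)(x)(B)$, which is immediate from $f_1 \sqsubseteq f_2$.
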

\begin{proof}
For each $x \in X$,
the finiteness of the measures $f_1(x)$ and $f_2(x)$ imply
the countable additibity of $(f_1 - f_2)(x)$ as follows:
\begin{align*}
(f_1 - f_2)(x)(\sum_n B_n)
&= f_1(x)(\sum_n B_n) - f_2(x)(\sum_n B_n)\\
&= \sum_n f_1(x)(B_n) - \sum_n f_2(x)(B_n)\\
&= \sum_n (f_1(x)(B_n) - f_2(x)(B_n))\\
&= \sum_n (f_1 - f_2)(x)(B_n)
\end{align*}
where $\sum_n B_n$ is the union of a countable disjoint collection $B_0$, $B_1$, ....
Therefore $f_1 - f_2$ is at least \emph{a function} of the form $X \to \mathcal{G}Y$.

The $\sigma$-algebra of $\mathcal{G}Y$ is generated by the following countable collection:
\[
	\SetBraket{\nu \in \mathcal{G}Y| \nu(A) \leq \alpha}
	 \quad (A \in \Sigma_Y, \alpha \in [0,1]\cap\mathbb{Q}).
\]
Since $f_1,f_2 \colon X \to \mathcal{G}Y$,
$A^\alpha_i = \inverse{f_i}(\SetBraket{\nu \in \mathcal{G}Y| \nu(A) \leq \alpha})$ is measurable for all $A \in \Sigma_Y$ and $\alpha \in [0,1]\cap\mathbb{Q}$ ($i = 1,2$).
We then calculate
\begin{align*}
\lefteqn{\inverse{(f_1 - f_2)}(\SetBraket{\nu \in \mathcal{G}Y| \nu(A) \leq \alpha})}\\
&= \SetBraket{x \in X| (f_1 - f_2)(x)(A) \leq \alpha}\\
&= \SetBraket{x \in X| f_1(x)(A) - f_2(x)(A) \leq \alpha}\\
&= \SetBraket{x \in X| f_1(x)(A) - \alpha \leq f_2(x)(A)}\\
&= \bigcap_{\beta \in [0,1]\cap\mathbb{Q}}\SetBraket{x \in X| f_2(x)(A) \leq \beta \implies f_1(x)(A) - \alpha \leq \beta}\\
&= \bigcap_{\beta \in [0,1]\cap\mathbb{Q}}\SetBraket{x \in X| f_2(x)(A) \leq \beta \implies f_1(x)(A)\leq \min(1,\alpha + \beta)}\\
&= \bigcap_{\beta \in [0,1]\cap\mathbb{Q}} ((X \setminus A_2^{\beta}) \cup A_1^{\min(1,\alpha + \beta)})
\end{align*}
Hencer, the function $f_1 - f_2$ is measurable.
\end{proof}
\subsection{Proofs in Section \ref{sec:gradedlifting}}
We recall the definition of the \emph{indicator function} $\chi_A \colon X \to [0,1]$ of a subset $A \subseteq X$:
\[
\chi_A(x) =
\begin{cases}
1, & \text{ if } x \in A\\
0, & \text{ if } x \notin A
\end{cases}
\]
The subset $A$ of $X$ is a measurable \emph{if and only if}
the indicator function $\chi_A$ is a measurable function $\colon X \to [0,1]$.
\begin{lemma}\label{lem:codensity-form}
The following equation holds for any $(\Phi,X,Y)$ in $\BRel(\Meas)$: 
\begin{align*}
	\mathcal{G}^{(\gamma,\delta)} \Phi
	&=\SetBraket{
		(\nu_1,\nu_2) 
		| \forall{(f,g) \colon \Phi \to {\leq} \text { in } \BRel(\Meas)} .
			\int_X f~d\nu_1 \leq \gamma\!\int_Y g~d\nu_2 + \delta
	},
\end{align*}
\end{lemma}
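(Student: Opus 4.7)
My plan is to prove the two inclusions separately, using indicator functions for one direction and the layer-cake representation for the other.

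For the $(\supseteq)$ direction, suppose $(\nu_1, \nu_2)$ satisfies the integral inequality for every arrow $(f,g) \colon \Phi \to {\leq}$ in $\BRel(\Meas)$. Given measurable $A \in \Sigma_X$ and $B \in \Sigma_Y$ with $\Phi(A) \subseteq B$, I would take $f = \chi_A$ and $g = \chi_B$. These are measurable into $[0,1]$, and the pair $(f,g)$ is an arrow $\Phi \to {\leq}$: for any $(x,y) \in \Phi$, if $x \in A$ then $y \in \Phi(A) \subseteq B$ so $f(x) = 1 = g(y)$, and otherwise $f(x) = 0 \leq g(y)$. The integral inequality then reads $\nu_1(A) \leq \gamma \nu_2(B) + \delta$, which is exactly the defining condition for $\mathcal{G}^{(\gamma,\delta)}\Phi$.

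For the $(\subseteq)$ direction, fix $(\nu_1, \nu_2) \in \mathcal{G}^{(\gamma,\delta)}\Phi$ and an arrow $(f,g) \colon \Phi \to {\leq}$, so $f \colon X \to [0,1]$ and $g \colon Y \to [0,1]$ are measurable with $f(x) \leq g(y)$ whenever $(x,y) \in \Phi$. For each $t \in [0,1]$ define the level sets $A_t = \{x \in X \mid f(x) > t\}$ and $B_t = \{y \in Y \mid g(y) > t\}$, which are measurable. The crucial observation is that $\Phi(A_t) \subseteq B_t$: if $y \in \Phi(A_t)$, then $(x,y) \in \Phi$ for some $x$ with $f(x) > t$, giving $g(y) \geq f(x) > t$, so $y \in B_t$. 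By the defining condition of $\mathcal{G}^{(\gamma,\delta)}\Phi$, we obtain $\nu_1(A_t) \leq \gamma \nu_2(B_t) + \delta$ for every $t$.

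Now I would invoke the layer-cake formula, which for a $[0,1]$-valued measurable function $f$ and a finite measure $\nu$ gives $\int_X f\, d\nu = \int_0^1 \nu(A_t)\, dt$. Integrating the pointwise-in-$t$ inequality over $t \in [0,1]$ yields
\[
\int_X f\, d\nu_1 = \int_0^1 \nu_1(A_t)\, dt \leq \int_0^1 \bigl(\gamma \nu_2(B_t) + \delta\bigr)\, dt = \gamma \int_Y g\, d\nu_2 + \delta,
\]
which is the desired inequality. The only mildly technical point is ensuring that $t \mapsto \nu_1(A_t)$ and $t \mapsto \nu_2(B_t)$ are measurable (they are monotone, hence Borel), so the plan is essentially a routine application of Fubini/layer-cake after the key set-theoretic observation that $\Phi(A_t) \subseteq B_t$ — and I expect no serious obstacle, since this mirrors the argument of \cite[Theorem 12]{katsumata_et_al:LIPIcs:2015:5532} cited in the paper.
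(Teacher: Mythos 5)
Your proposal is correct, and it follows the same overall strategy as the paper: the $(\supseteq)$ direction via the indicator-function pair $(\chi_A,\chi_B)$ is literally the paper's argument, and the $(\subseteq)$ direction rests on the same key observation, namely the level-set inclusion $\Phi(\{f \geq t\}) \subseteq \{g \geq t\}$, which turns the integral comparison into the defining measure inequality at each level $t$. The only difference is in how that family of inequalities is aggregated: you integrate over $t \in [0,1]$ using the layer-cake/Tonelli formula $\int_X f\,d\nu_1 = \int_0^1 \nu_1(A_t)\,dt$, whereas the paper uses the definition of the Lebesgue integral as a supremum of weighted sums $\sum_k \alpha_k\, \nu(\{f \geq \sum_{l\leq k}\alpha_l\})$ with $\sum_k \alpha_k \leq 1$, i.e.\ a discretized layer-cake. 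The two devices are interchangeable here; yours handles the additive slack more transparently (the $\delta$ term contributes exactly $\int_0^1 \delta\,dt = \delta$ because $f,g$ take values in $[0,1]$), while the paper's version achieves the same bound through the side condition $\sum_k \alpha_k \leq 1$ and avoids invoking Fubini/Tonelli, staying closer to the bare definition of the integral. Your remark that $t \mapsto \nu_i(\cdot)$ on level sets is monotone, hence measurable, correctly discharges the only technical point in the continuous formulation.
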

\begin{proof}
We recall
\[
\mathcal{G}^{(\gamma,\delta)} \Phi
	= \SetBraket{
		(\nu_1,\nu_2) \in \mathcal{G}X \times \mathcal{G}Y
		| \begin{array}{l@{}}
			\forall{A \in \Sigma_X, B \in \Sigma_Y}.\\
			\Phi(A)\subseteq B \implies \nu_1(A) \leq \gamma \nu_2(B) + \delta
			\end{array}
	}.
\]
($\supseteq$)
Suppose the pair $(\nu_1,\nu_2)$ satisfies $\int_X f~d\nu_1 \leq \gamma\!\int_Y g~d\nu_2 + \delta$ for all $(f,g) \colon \Phi \to {\leq}$ in $\BRel(\Meas)$.
 
Assume that $A \in \Sigma_X$ and $B \in \Sigma_Y$ satisfy $\Phi(A)\subseteq B$.
The indicator functions $\chi_A \colon X \to [0,1]$, $\chi_B\colon Y \to [0,1]$ are measurable,
and satisfy $\chi_A(x) \leq \chi_B(y)$ for any $(x,y) \in \Phi$
because $(x,y)\in\Phi \wedge x \in A \implies y \in \Phi(A)$.
These imply that $(\chi_A, \chi_B)$ is an arrow $\Phi \to {\leq}$ in $\BRel(\Meas)$.
We then obtain $\int_X \chi_A~d\nu_1 \leq \gamma\!\int_Y \chi_B~d\nu_2 + \delta$, which is
equivalent to $\nu_1(A) \leq \gamma \nu_2(B) + \delta$.

($\subseteq$)
Suppose $(\nu_1,\nu_2) \in \mathcal{G}^{(\gamma,\delta)}\Phi$.
Take an arbitrary arrow $(f,g) \colon \Phi \to {\leq}$ in $\BRel(\Meas)$.
We have $\inverse{f}([\beta,1])\in \Sigma_X$ and $\inverse{g}([\beta,1])\in \Sigma_Y$.
We obtain $\Phi(\inverse{f}([\beta,1])) \subseteq \inverse{g}([\beta,1])$ for any $\beta \in [0,1]$ because $(x,y)\in \Phi \wedge f(x) \geq \beta \implies g(y) \geq \beta$.
By the definiton of Lebesgue integration, we calculate as follows:
\begin{align*}
\lefteqn{\int_X f~d\nu_1}\\
&= \sup\SetBraket{\sum_{k=0}^{n} \alpha_k \nu_1(\inverse{f}([\sum_{l=0}^{k}\alpha_l,1])) | n \in \mathbb{N}, \{\alpha_k\}_{k=1}^{n} \text{ s.t. }\sum_{k=0}^{n} \alpha_k  \leq 1, \forall k.(0 \leq \alpha_k)}\\
&\leq \sup\SetBraket{\sum_{k=0}^{n} \alpha_k (\gamma \nu_2(\inverse{g}([\sum_{l=0}^{k}\alpha_l,1])) + \delta) | n \in \mathbb{N}, \{\alpha_k\}_{k=1}^{n} \text{ s.t. }\sum_{k=0}^{n} \alpha_k  \leq 1, \forall k.(0 \leq \alpha_k)}\\
&\leq \gamma\sup\SetBraket{\sum_{k=0}^{n} \alpha_k\nu_2(\inverse{g}([\sum_{l=0}^{k}\alpha_l,1]))| n \in \mathbb{N}, \{\alpha_k\}_{k=1}^{n} \text{ s.t. }\sum_{k=0}^{n} \alpha_k  \leq 1, \forall k.(0 \leq \alpha_k)} + \delta \\
&=\gamma \int_Y g~d\nu_2 + \delta.
\end{align*}
Here, the first and last equality are given by definition of Lebesgue integration.
The first inequallity is obtained from the assumption $(\nu_1,\nu_2)\in\mathcal{G}^{(\gamma,\delta)}\Phi$.
The second inequallity is obtained from the condition $\sum_{k=0}^{n} \alpha_k  \leq 1$.
\end{proof}

\subsection{Proofs in Section \ref{sec:soundness}}
\begin{lemma}
The rule [rand] is sound.
\end{lemma}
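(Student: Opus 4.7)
The plan is to reduce the memory-level judgement to the value-level hypothesis on $\interpret{d}$, using only the fact that the strength of $\mathcal{G}$ is given by a tensor with a Dirac measure (so that only a trivial instance of Fubini is needed).

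First I would unfold the validity of the conclusion: for every $(m_1,m_2) \in \Psi'$, I need to show
\[
(\interpret{c_1}(m_1),\interpret{c_2}(m_2)) \in \overline{\mathcal{G}^{(\gamma,\delta)}}(x_1\lrangle{1} = x_2\lrangle{2}),
\]
where $c_i$ denotes $x_i \xleftarrow{\$} d(e^i_1,\ldots,e^i_m)$. Setting $v_i = (\interpret{e^i_1}(m_i),\ldots,\interpret{e^i_m}(m_i))$, the shape of $\Psi'$ yields $(v_1,v_2) \in \Psi$, so the hypothesis delivers $(\interpret{d}(v_1), \interpret{d}(v_2)) \in \overline{\mathcal{G}^{(\gamma,\delta)}}\mathrm{Eq}_{\interpret{\tau}}$.

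Next I would unfold the denotational semantics of random assignment. Using the clause $\interpret{c_i} = \mathcal{G}(\rho_{(x_i,\Gamma)}) \circ \mathrm{st} \circ \lrangle{\interpret{d}(v_i),\mathrm{id}}$ together with $\mathrm{st}_{X,Y}(x,\nu) = \delta_x \otimes \nu$, a direct computation, which is the sole (and trivial) use of Fubini, gives for every $A \in \Sigma_{\interpret{\Gamma}}$
\[
\interpret{c_i}(m_i)(A) = \interpret{d}(v_i)(A^{(i)}_{m_i}), \qquad A^{(i)}_{m_i} = \{v \in \interpret{\tau} \mid \rho_{(x_i,\Gamma)}(v,m_i) \in A\}.
\]
Measurability of $A^{(i)}_{m_i}$ follows from that of $\rho_{(x_i,\Gamma)}$. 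Then, given $A, B \in \Sigma_{\interpret{\Gamma}}$ with $(x_1\lrangle{1} = x_2\lrangle{2})(A) \subseteq B$, I would show $\mathrm{Eq}_{\interpret{\tau}}(A^{(1)}_{m_1}) \subseteq B^{(2)}_{m_2}$: if $v \in A^{(1)}_{m_1}$, then $\rho_{(x_1,\Gamma)}(v,m_1)$ and $\rho_{(x_2,\Gamma)}(v,m_2)$ agree on $x_1\lrangle{1}$ and $x_2\lrangle{2}$, so the second memory lies in $B$, i.e., $v \in B^{(2)}_{m_2}$. Invoking the hypothesis on $\interpret{d}$ then yields
\[
\interpret{c_1}(m_1)(A) = \interpret{d}(v_1)(A^{(1)}_{m_1}) \leq \gamma\,\interpret{d}(v_2)(B^{(2)}_{m_2}) + \delta = \gamma\,\interpret{c_2}(m_2)(B) + \delta,
\]
and the opposite inequality follows by symmetry of $\overline{\mathcal{G}^{(\gamma,\delta)}}$.

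The hard part will be the bookkeeping in the third step: rigorously matching the memory-level relation $x_1\lrangle{1} = x_2\lrangle{2}$ with the value-level equality $\mathrm{Eq}_{\interpret{\tau}}$ through the two different reindexing maps $\rho_{(x_1,\Gamma)}$ and $\rho_{(x_2,\Gamma)}$, while ensuring that the untouched parts of the two memories never cause interference. Once this correspondence of measurable sets is in place, the remainder is a routine manipulation of integrals against a Dirac-tensored measure and a direct appeal to the hypothesis.
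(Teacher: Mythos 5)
Your proposal is correct, and its overall skeleton matches the paper's: unfold the judgement pointwise, compute $\interpret{c_i}(m_i)$ via the strength (a tensor with a Dirac measure), and reduce everything to the hypothesis $(\interpret{d},\interpret{d})\colon \Psi \to \overline{\mathcal{G}^{(\gamma,\delta)}}\mathrm{Eq}_{\interpret{\tau}}$ instantiated at the evaluated arguments. Where you diverge is in the transfer step from the value-level relation $\mathrm{Eq}_{\interpret{\tau}}$ to the memory-level postcondition $x_1\lrangle{1}=x_2\lrangle{2}$: the paper packages the two sides as indicator functions $f=\chi_{\inverse{(\rho_{(x_1\colon\tau,\Gamma)}(-,m_1))}(A)}$ and $g=\chi_{A_{x_1}}$ (with $A_{x_1}$ a \emph{projection} of $A$), checks that $(f,g)$ is an arrow $\mathrm{Eq}_{\interpret{\tau}}\to{\leq}$, and then invokes the integral characterization of $\mathcal{G}^{(\gamma,\delta)}$ (Lemma \ref{lem:codensity-form}); you instead stay with the raw set-based definition of $\mathcal{G}^{(\gamma,\delta)}$, using the \emph{preimage} sets $A^{(1)}_{m_1}$ and $B^{(2)}_{m_2}$ under the measurable sections $v\mapsto\rho_{(x_i\colon\tau,\Gamma)}(v,m_i)$ and the inclusion $\mathrm{Eq}_{\interpret{\tau}}(A^{(1)}_{m_1})\subseteq B^{(2)}_{m_2}$. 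Your route buys two things: it needs no appeal to Lemma \ref{lem:codensity-form}, and its measurability obligations are automatic (preimages of measurable sets under measurable maps), whereas the paper's argument evaluates $\nu_2$ on $\Phi(A)=\interpret{\Gamma'}\times\interpret{\tau}\times A_{x_1}$ and must assert measurability of the projection $A_{x_1}$, a genuinely delicate point in general measurable spaces; working with an arbitrary measurable $B\supseteq\Phi(A)$ as you do sidesteps it. What the paper's heavier machinery buys is reuse: the integral characterization is needed anyway for the graded-monad multiplication, so the paper leans on it here too. Your handling of the opposite inequality by symmetry of $\mathrm{Eq}$ and of the postcondition (with the roles of $x_1,x_2$ swapped) is consistent with the paper's implicit treatment.
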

\begin{proof}
	We assume $x_1 \neq x_2$ since the soundness is obvious when $x_1 = x_2$.
	We then obtain $\Gamma = \Gamma, x_1\colon\tau, x_2\colon\tau$ from the precondition of the rule [rand].
	Hence, we may assume $\interpret{\Gamma} = \interpret{\Gamma'} \times \interpret{\tau} \times \interpret{\tau}$.
	It suffices to show
	\begin{align*}
		\lefteqn{(m_1,m_2)\in\Psi}\\
		&\implies (\interpret{\Gamma\vdash x_1 \xleftarrow{\$} d(e^1_1,\ldots,e^1_m) }(m_1), \interpret{\Gamma\vdash x_2 \xleftarrow{\$} d(e^2_1,\ldots,e^2_m)}(m_2) )
		\in \mathcal{G}^{(\gamma, \delta)}(\Phi),
	\end{align*}
	where
	\[
		\Phi
		= (x_1\lrangle{1} = x_2\lrangle{2})
		= \SetBraket{(m_1,m_2) | \pi_{x_1}(m_1) = \pi_{x_2}(m_2)}.
	\]
	Let $(m_1,m_2)\in\Psi$ and $A \in \Sigma_{\interpret{\Gamma}}$.
	We have $\Phi(A) = \interpret{\Gamma'} \times \interpret{\tau} \times A_{x_1}$,
	where $A_{x_1} = \SetBraket{\pi_3(m) | m \in A}$.
	We remark that $A_{x_1}$ is measurable, and therefore so is $\Phi(A)$.
	
	We denote by $\nu_i$ the measure $\interpret{d}(\interpret{\Gamma \vdash^t e^i_1}(m_i),\ldots, \interpret{\Gamma \vdash^t e^i_m}(m_i))$ ($i = 1,2$).
	\begin{align*}
		\interpret{\Gamma\vdash x_1 \xleftarrow{\$} d(e^1_1,\ldots,e^1_m)}(m_1)(A)
		&= \mathcal{G}(\rho_{(x\colon\tau, \Gamma)})
			\circ\mathrm{st}_{\interpret{\tau},\interpret{\Gamma}}
			\circ \lrangle{\nu_1,m_1} (A)\\
		&= \mathrm{st}^{\mathcal{G}}_{\interpret{\tau},\interpret{\Gamma}}(\nu_1,m_1)
			(\inverse{\rho_{(x_1\colon\tau,\Gamma)}}(A))\\
		&= (\nu_1 \otimes \delta_{m_1})(\inverse{\rho_{(x_1\colon\tau,\Gamma)}}(A))\\
		&= \int_{\interpret{\tau} \times \interpret{\Gamma}} \chi_{\inverse{\rho_{(x_1\colon\tau,\Gamma)}}(A)}~d(\nu_1 \otimes \delta_{m_1})\\
		&= \int_{a\in\interpret{\tau}} \left(\int_\interpret{\Gamma} \chi_{\inverse{\rho_{(x_1\colon\tau,\Gamma)}}(A)}(a,-) ~d(\delta_{m_1})\right) ~d\nu_1\\
		&= \int_{\interpret{\tau}} f ~d\nu_1\\
		\vspace{2em}
		\interpret{\Gamma\vdash x_2 \xleftarrow{\$} d(e^2_1,\ldots,e^2_m)}(m_2)(\Phi(A))
		&=\interpret{\Gamma\vdash x_2 \xleftarrow{\$} d(e^2_1,\ldots,e^2_m)}(m_2)(\interpret{\Gamma'} \times \interpret{\tau} \times A_{x_1})\\
		&= (\nu_2 \otimes \delta_{m_2})(\inverse{\rho_{(x_2\colon\tau,\Gamma)}}(\interpret{\Gamma'} \times \interpret{\tau} \times A_{x_1}))\\
		&= (\nu_2 \otimes \delta_{m_2})(A_{x_1})\\
		&= \int_{\interpret{\tau}} g ~d\nu_2,
	\end{align*}
		Where, 
		$f= \chi_{\inverse{(\rho_{(x_1\colon\tau,\Gamma)}(-,m_1))}(A)}$ and $g = \chi_{A_{x_1}}$.
		The pair of these arrows $(f, g)$ forms an arrow $\mathrm{Eq}_{\interpret{\tau}} \to \leq$ in $\BRel(\Meas)$.
		Hence we obtain from Lemma \ref{lem:codensity-form},
	\[
		\interpret{\Gamma\vdash x_1 \xleftarrow{\$} d(e^1_1,\ldots,e^1_m)}(m_1)(A)
		\leq \gamma \interpret{\Gamma\vdash x_2 \xleftarrow{\$} d(e^2_1,\ldots,e^2_m)}(m_1)(A) + \delta.
	\]
	Since $A$ is arbitrary, we conclude 
	\[
		(\interpret{\Gamma \vdash x_1 \xleftarrow{\$} d(e^1_1,\ldots,e^1_m)}(m_1), \interpret{x_2 \xleftarrow{\$} d(e^2_1,\ldots,e^2_m)} (m_2))
		\in \mathcal{G}^{(\gamma, \delta)}(\Phi)
	\]
\end{proof}
\begin{lemma}
The rule [cond] is sound.
\end{lemma}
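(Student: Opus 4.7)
The plan is to prove soundness of [cond] by pointwise case analysis on the common Boolean value of $b$ and $b'$ at related memories. Fix $(m_1,m_2) \in \Psi$. The hypothesis $\Psi \Rightarrow b\lrangle{1} = b'\lrangle{2}$ gives $\interpret{\Gamma \vdash b}(m_1) = \interpret{\Gamma \vdash b'}(m_2)$, so the two branches executed on the two sides coincide in index. I would first unfold the semantic definition
\[
\interpret{\Gamma \vdash \mathtt{if}~b~\mathtt{then}~c_1~\mathtt{else}~c_2}
= [\interpret{c_1},\interpret{c_2}] \circ {\cong_{\interpret{\Gamma}}} \circ \lrangle{\interpret{b},\mathrm{id}}
\]
and observe that evaluated at $m_1$ this gives $\interpret{c_1}(m_1)$ when $\interpret{b}(m_1) = \mathtt{true}$ and $\interpret{c_2}(m_1)$ otherwise, and analogously on the right.

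Next I would split into two cases. In the case $\interpret{b}(m_1) = \interpret{b'}(m_2) = \mathtt{true}$, we have $(m_1,m_2) \in \Psi \wedge b\lrangle{1}$, so the hypothesis $\models c_1 \sim_{(\gamma,\delta)} c_1' \colon \Psi \wedge b\lrangle{1} \Rightarrow \Phi$ gives
\[
(\interpret{c_1}(m_1), \interpret{c_1'}(m_2)) \in \overline{\mathcal{G}^{(\gamma,\delta)}}\Phi,
\]
which by the unfolding above is exactly the pair of interpretations of the two if-then-else commands at $(m_1,m_2)$. The case $\interpret{b}(m_1) = \interpret{b'}(m_2) = \mathtt{false}$ is symmetric, using the hypothesis on $c_2$ and $c_2'$ together with $(m_1,m_2) \in \Psi \wedge \neg b\lrangle{1}$.

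Since the two cases are exhaustive, we conclude $(\interpret{\mathtt{if}~b~\mathtt{then}~c_1~\mathtt{else}~c_2}(m_1), \interpret{\mathtt{if}~b'~\mathtt{then}~c_1'~\mathtt{else}~c_2'}(m_2)) \in \overline{\mathcal{G}^{(\gamma,\delta)}}\Phi$ for every $(m_1,m_2) \in \Psi$, which is the required condition for $(\interpret{\mathtt{if}~b~\mathtt{then}~c_1~\mathtt{else}~c_2},\interpret{\mathtt{if}~b'~\mathtt{then}~c_1'~\mathtt{else}~c_2'})$ to be an arrow $\Psi \to \overline{\mathcal{G}^{(\gamma,\delta)}}\Phi$ in $\BRel(\Meas)$. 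There is no real obstacle here: the only point requiring care is bookkeeping around the distributivity isomorphism $\cong_{\interpret{\Gamma}}$, but once it is unfolded pointwise the argument is a direct pushforward of the two premises along the Boolean branch selector. No integration or measurability argument is needed beyond what is already baked into the semantics, because the branching is deterministic on the memory $m_i$.
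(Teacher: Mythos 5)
Your proposal is correct and follows essentially the same route as the paper: fix $(m_1,m_2)\in\Psi$, use $\Psi \Rightarrow b\lrangle{1}=b'\lrangle{2}$ to equate the branch selectors, unfold the copair-and-distributivity semantics of the conditional, and conclude in each of the two cases from the corresponding premise on $(c_1,c_1')$ or $(c_2,c_2')$. Your write-up is in fact slightly more explicit than the paper's, since you spell out that $(m_1,m_2)$ lands in $\Psi\wedge b\lrangle{1}$ (respectively $\Psi\wedge\neg b\lrangle{1}$) before invoking the hypothesis, a step the paper leaves implicit.
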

\begin{proof}
	Let $(m_1,m_2) \in \Psi$.
	We have $\interpret{\Gamma\vdash b}(m_1) = \interpret{\Gamma\vdash b'}(m_2)$ from the preconditions of the rule [cond].
	Since
	\[
		\interpret{\Gamma \vdash \mathtt{if}~b~\mathtt{then}~c_1~\mathtt{else}~c_2}
		=\left[\interpret{\Gamma\vdash c_1},\interpret{\Gamma\vdash c_2} \right]
		\circ \cong_{\interpret{\Gamma}}
		\circ\lrangle{\interpret{\Gamma\vdash b} ,\textrm{id}_{\interpret{\Gamma}}},
	\]
	we have the following two cases:
	\begin{enumerate}
	\item When $\interpret{\Gamma\vdash b}(m_1)= \iota_1(\ast)$,
	we obtain
	\begin{align*}
		\interpret{\Gamma \vdash \mathtt{if}~b~\mathtt{then}~c_1~\mathtt{else}~c_2}(m_1)& = \interpret{\Gamma\vdash c_1}(m_1)\\
		\interpret{\Gamma \vdash  \mathtt{if}~b'~\mathtt{then}~c_1'~\mathtt{else}~c_2'}(m_2) &= \interpret{\Gamma\vdash c_1'}(m_2)
	\end{align*}
	We then obtain
	\begin{equation}\label{eq:rule_cond}
		(\interpret{\Gamma \vdash \mathtt{if}~b~\mathtt{then}~c_1~\mathtt{else}~c_2}(m_1), \interpret{\Gamma \vdash \mathtt{if}~b'~\mathtt{then}~c_1'~\mathtt{else}~c_2'}(m_2))
		\in \overline{\mathcal{G}^{(\gamma,\delta)}}\Phi.
	\end{equation}
	\item When $\interpret{\Gamma\vdash b}(m_1)= \iota_2(\ast)$,
	we obtain (\ref{eq:rule_cond}) similarly.
	\end{enumerate}
	\vspace{-\baselineskip}
\end{proof}
\begin{lemma}
	The rule [while] is sound.
\end{lemma}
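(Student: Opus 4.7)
The plan is to exploit the $\omega\mathbf{CPO}_\bot$-enrichment of $\SRel$ recorded in the preliminaries, which gives $\interpret{\Gamma \vdash \mathtt{while}~b~\mathtt{do}~c_i} = \sup_{m\in\mathbb{N}} \interpret{\Gamma \vdash [\mathtt{while}~b~\mathtt{do}~c_i]_m}$ pointwise for $i = 1, 2$. Accordingly the proof will split into two parts: (i) an \emph{admissibility} lemma showing that $\overline{\mathcal{G}^{(\gamma,\delta)}}\Phi$ is closed under pairwise pointwise $\omega$-sups of sub-probability measures; and (ii) a finite-unrolling induction establishing the appropriately graded relation on each $[\mathtt{while}~b~\mathtt{do}~c_i]_m$. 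Instantiating (ii) at the starting variant value and then pushing (i) through the suprema on both sides will yield the judgement for the full loop.

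For (i), given a pairwise $\omega$-chain $\{(\nu_1^m, \nu_2^m)\}_m$ with each $(\nu_1^m, \nu_2^m) \in \overline{\mathcal{G}^{(\gamma,\delta)}}\Phi$ and any $A \in \Sigma_X$, $B \in \Sigma_Y$ with $\Phi(A) \subseteq B$, I would compute $(\sup_m \nu_1^m)(A) = \sup_m \nu_1^m(A) \leq \sup_m (\gamma \nu_2^m(B) + \delta) = \gamma (\sup_m \nu_2^m)(B) + \delta$, where the outer equalities use that pointwise suprema of sub-probability measures are evaluated set-wise as numerical suprema, exactly the fact underlying the continuity of composition in $\SRel$ (via the monotone convergence theorem). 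The symmetric half of $\overline{\mathcal{G}^{(\gamma,\delta)}}$ is handled analogously, giving the required closure.

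For (ii), I would prove by induction on $m$ the strengthened statement: for every integer $k \geq 0$ and every $(s_1, s_2) \in \Theta \wedge e\lrangle{1} = k \wedge e\lrangle{1} \geq 0$, the pair $(\interpret{[\mathtt{while}~b~\mathtt{do}~c_1]_m}(s_1), \interpret{[\mathtt{while}~b'~\mathtt{do}~c_2]_m}(s_2))$ lies in $\overline{\mathcal{G}^{(\gamma^{(k)}, \delta^{(k)})}}(\Theta \wedge \neg b_1\lrangle{1})$, with $\gamma^{(k)} = \prod_{l=k}^{n-1}\gamma_l$ and $\delta^{(k)} = \sum_{l=k}^{n-1}\delta_l$ (empty product $1$, empty sum $0$). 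The base case $m = 0$ relies on $\Theta \Rightarrow b\lrangle{1} = b'\lrangle{2}$: either the guard is true on both sides and both take $\mathtt{null}$, yielding $(\bot, \bot)$ which trivially lies in every $\overline{\mathcal{G}^{(-,-)}}$, or both take $\mathtt{skip}$ and the unit of $\mathcal{G}$ directly witnesses the postcondition. The inductive step splits on $b_1$: if false, both sides reduce to $\mathtt{skip}$; if true, the side condition $\Theta \wedge e\lrangle{1} \geq n \Rightarrow \neg b_1\lrangle{1}$ forces $k < n$, so the per-iteration hypothesis at $k$ provides grade $(\gamma_k, \delta_k)$ into $\Theta \wedge e\lrangle{1} > k$, and the inductive hypothesis at $(m{-}1, k{+}1)$ composes via the multiplication of the $M$-graded monad to $(\gamma_k \gamma^{(k+1)}, \delta_k + \delta^{(k+1)}) = (\gamma^{(k)}, \delta^{(k)})$. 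Instantiating this at $k = 0$ for every $m$ and applying (i) across $m$ closes the argument. The main obstacle is (i): while pointwise sup together with monotone convergence cleanly handles the forward inequality, preserving the symmetric two-sided structure of $\overline{\mathcal{G}^{(\gamma,\delta)}}$ under the limit and correctly absorbing the affine factor $\gamma(\cdot) + \delta$ on both sides is the real technical content; a secondary care point in (ii) is that both degenerate $\mathtt{null}/\mathtt{skip}$ cases must be aligned by $\Theta$ so that the invariant stays coupled across the two sides.
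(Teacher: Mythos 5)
Your proposal is correct in outline, but it reaches the full loop by a genuinely different route than the paper. Both arguments share the finite-unrolling induction: you prove, for each depth $m$ and variant value $k$, a judgement for $[\mathtt{while}~b~\mathtt{do}~c_i]_m$ carrying the \emph{remaining} grade $(\prod_{l=k}^{n-1}\gamma_l,\sum_{l=k}^{n-1}\delta_l)$ and already the postcondition $\Theta\wedge\neg b_1\lrangle{1}$ (using that unterminated mass is the null measure at finite depth, and quietly weakening grades along the preorder when the variant jumps past $k+1$, since the body only guarantees $e\lrangle{1}>k$), whereas the paper's induction keeps the full grade and the postcondition $\Theta\wedge e\lrangle{1}\geq n+k$, deferring guard-falsity to the end. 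The real divergence is the limit step: you prove an admissibility lemma, namely that $\overline{\mathcal{G}^{(\gamma,\delta)}}\Phi$ is closed under pointwise $\omega$-suprema, and push $\interpret{\mathtt{while}~b~\mathtt{do}~c}=\sup_m\interpret{[\mathtt{while}~b~\mathtt{do}~c]_m}$ through it; the paper never needs any such closure property. Instead it composes the $n$-fold unrolling with the \emph{full} loop by [seq], notes that the unrolling's postcondition forces $\neg b_1\lrangle{1}$ so the trailing loop contributes only grade $(1,0)$, and then transfers the judgement via the semantic identity $\interpret{[\mathtt{while}~b~\mathtt{do}~c]_n;\mathtt{while}~b~\mathtt{do}~c}=\interpret{\mathtt{while}~b~\mathtt{do}~c}$, which is where the $\wCPO_\bot$-enrichment is used. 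Your route costs one extra lemma but is more robust---closure under suprema is exactly what a loop rule without a uniform iteration bound would require---while the paper's route exploits the variant's bound $n$ to avoid touching the lifting's limit behaviour altogether; also, the closure computation $(\sup_m\nu_1^m)(A)=\sup_m\nu_1^m(A)\leq\gamma(\sup_m\nu_2^m)(B)+\delta$ and its symmetric half, which you single out as the main obstacle, are in fact immediate for increasing chains, so the extra cost is small.
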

\begin{proof}
	We first prove by induction on $n$:
	\begin{align}\label{eq:rule_while_1}
		\nonumber
		\models & [\mathtt{while}~b_1~\mathtt{do}~c_1]_n \sim_{(\prod_{k=0}^{n-1},\gamma_k \sum_{k=0}^{n-1} \delta_k)}  [\mathtt{while}~b_2~\mathtt{do}~c_2]_n \colon
		\\
		 & \quad \Theta\wedge b_1\lrangle{1} \wedge e\lrangle{1} \geq k \Rightarrow \Theta \wedge e\lrangle{1} \geq n+k 
	\end{align}
	\begin{description}
		\item[case: $n=0$]
		We obtain $\models \mathtt{null} \sim_{(1,0)} \mathtt{null} \colon \Theta\wedge b_1\lrangle{1} \wedge e\lrangle{1} \geq k \Rightarrow \emptyset$ since $\interpret{\Gamma\vdash \mathtt{null}}$ is the null measure over $\interpret{\Gamma}$.
		We recall that the following equality:
		\[
			[\mathtt{while}~b_i~\mathtt{do}~c_i]_0 = \mathtt{if}~b_i~\mathtt{then}~\mathtt{null}~\mathtt{else}~\mathtt{skip},
		\]
		We obtain from the above equality, (\ref{eq:rule_while_1}) by applying [skip], [cond], and [weak].
		\item[case: $n=m+1$]
		From the precondition of [while] and the soundness of [case],
		\[
			\models c_1 \sim_{(\gamma_m,\delta_m)} c_2 \colon \Theta \wedge (e\lrangle{1} = k) \implies  (e\lrangle{1} > k)
		\]
		By the induction hypothesis,
		\begin{align*}
		\models & [\mathtt{while}~b_1~\mathtt{do}~c_1]_m \sim_{(\prod_{k=0}^{m-1},\gamma_k \sum_{k=0}^{m-1} \delta_k)}  [\mathtt{while}~b_2~\mathtt{do}~c_2]_m \colon
		\\
		 & \quad \Theta\wedge b_1\lrangle{1} \wedge e\lrangle{1} \geq k \Rightarrow \Theta \wedge e\lrangle{1} \geq m+k 
		\end{align*}
		From the soundness of the rule [seq], we obtain
		\begin{align*}
		\models &c_1 ; [\mathtt{while}~b_1~\mathtt{do}~c_1]_m \sim_{(\prod_{k=0}^{m},\gamma_k \sum_{k=0}^{m} \delta_k)}  c_2 ; [\mathtt{while}~b_2~\mathtt{do}~c_2]_m \colon
		\\
		 & \quad \Theta\wedge b_1\lrangle{1} \wedge e\lrangle{1} \geq k \Rightarrow \Theta \wedge e\lrangle{1} \geq m+1+k 
		\end{align*}
		From the soundness of [weak], [cond], and [skip] we conclude (\ref{eq:rule_while_1}).
	\end{description}
	It is obvious that $\Theta \Rightarrow b_1\lrangle{1}  = b_2\lrangle{2}$ implies
	\begin{equation}\label{eq:rule_while_2}
		\models \mathtt{while}~b_1~\mathtt{do}~c_1 \sim_{(1,0)} \mathtt{while}~b_2~\mathtt{do}~c_2
		\colon \Theta\wedge\neg\wedge b_1\lrangle{1} \Rightarrow \Theta\wedge\neg b_1\lrangle{1}.
	\end{equation}
	From (\ref{eq:rule_while_1}) and (\ref{eq:rule_while_2}), and the soundness of [cond] and [seq], we obtain
	\begin{align*}
		\models & [\mathtt{while}~b_1~\mathtt{do}~c_1]_n;
			\mathtt{while}~b_1~\mathtt{do}~c_1
 \sim_{(\prod_{k=0}^{m} \gamma_k,\sum_{k=0}^{m} \delta_k)}  [\mathtt{while}~b_2~\mathtt{do}~c_2]_n;\mathtt{while}~b_2~\mathtt{do}~c_2 \colon
		\\
		 & \quad \Theta\wedge b_1\lrangle{1} \wedge e\lrangle{1} \geq 0 \Rightarrow \Theta \wedge \neg b_1\lrangle{1}
	\end{align*}
	Since $\SRel = \Meas_{\mathcal{G}}$ is $\omega\mathbf{CPO}_\bot$-enriched, for any command $c$ and expression of the type $\mathtt{bool}$, we obtain
	$
	\interpret{\Gamma\vdash[\mathtt{while}~b~\mathtt{do}~c]_n;\mathtt{while}~b~\mathtt{do}~c} = \interpret{\Gamma\vdash\mathtt{while}~b~\mathtt{do}~c}
	.$
	Hence,
	\begin{align*}
\models & \mathtt{while}~b_1~\mathtt{do}~c_1
 \sim_{(\prod_{k=0}^{m} \gamma_k,\sum_{k=0}^{m} \delta_k)}
		\mathtt{while}~b_2~\mathtt{do}~c_2 \colon
		\\
		 & \quad \Theta\wedge b_1\lrangle{1} \wedge e\lrangle{1} \geq 0 \Rightarrow \Theta \wedge \neg b_1\lrangle{1}
	\end{align*}
\end{proof}
\begin{lemma}
	The rule [frame] is sound.
\end{lemma}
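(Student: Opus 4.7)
The plan is to unpack the definition of validity and use the witnesses $A,B$ provided by $\mathrm{Range}(\Theta)$ to trim the test sets we need to bound. Let $(m_1,m_2) \in \Psi \wedge \Theta$, and write $\nu_1 = \interpret{c_1}(m_1)$, $\nu_2 = \interpret{c_2}(m_2)$. From the first premise we know $(\nu_1,\nu_2) \in \overline{\mathcal{G}^{(\gamma,\delta)}}\Phi$, and from the second premise, applied at $(m_1,m_2) \in \Theta$, we get $(\nu_1,\nu_2) \in \mathrm{Range}(\Theta)$, so we obtain measurable sets $A,B \in \Sigma_{\interpret{\Gamma}}$ with $A \times B \subseteq \Theta$, $\nu_1(A)=\nu_1(\interpret{\Gamma})$, and $\nu_2(B)=\nu_2(\interpret{\Gamma})$. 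In particular $\nu_1(\interpret{\Gamma}\setminus A)=0$ and $\nu_2(\interpret{\Gamma}\setminus B)=0$.

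The goal is $(\nu_1,\nu_2) \in \overline{\mathcal{G}^{(\gamma,\delta)}}(\Phi \wedge \Theta)$. So I fix arbitrary $C,D \in \Sigma_{\interpret{\Gamma}}$ with $(\Phi \wedge \Theta)(C) \subseteq D$ and aim at $\nu_1(C) \leq \gamma\,\nu_2(D) + \delta$. Set $C^\ast = C \cap A$ and $D^\ast = D \cup (\interpret{\Gamma}\setminus B)$; both lie in $\Sigma_{\interpret{\Gamma}}$. The key inclusion is $\Phi(C^\ast) \subseteq D^\ast$: if $y \in \Phi(C^\ast)$ witnessed by some $x \in C \cap A$ with $(x,y)\in\Phi$, then either $y \notin B$, in which case $y \in D^\ast$ trivially, or $y \in B$, in which case $(x,y) \in A \times B \subseteq \Theta$, so $(x,y) \in \Phi \wedge \Theta$ with $x \in C$, giving $y \in (\Phi\wedge\Theta)(C) \subseteq D \subseteq D^\ast$.

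Having $\Phi(C^\ast) \subseteq D^\ast$, I apply $(\nu_1,\nu_2) \in \mathcal{G}^{(\gamma,\delta)}\Phi$ to conclude $\nu_1(C^\ast) \leq \gamma\,\nu_2(D^\ast) + \delta$. Now the full-measure conditions reduce this to the desired bound: $\nu_1(C) = \nu_1(C \cap A) = \nu_1(C^\ast)$ because $\interpret{\Gamma}\setminus A$ is $\nu_1$-null, and $\nu_2(D^\ast) \leq \nu_2(D) + \nu_2(\interpret{\Gamma}\setminus B) = \nu_2(D)$ by sub-additivity. Hence $\nu_1(C) \leq \gamma\,\nu_2(D) + \delta$, which is half of membership in $\overline{\mathcal{G}^{(\gamma,\delta)}}(\Phi \wedge \Theta)$. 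The opposite-direction inequality, needed because $\overline{\mathcal{G}^{(\gamma,\delta)}}$ symmetrises, is obtained by swapping the roles of $1$ and $2$ throughout: the opposite direction in $\overline{\mathcal{G}^{(\gamma,\delta)}}\Phi$ gives the analogous bound with $C'^\ast = C \cap B$ and $D'^\ast = D \cup (\interpret{\Gamma}\setminus A)$.

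The main obstacle is really the trimming step, which is conceptually where the definition of $\mathrm{Range}(\Theta)$ is designed to pay off: one must notice that $\Theta$ does not need to be measurable, only that two measurable rectangles sitting inside it carry full measure, and that this is exactly enough to turn a bound on $\Phi$-preimages into a bound on $(\Phi\wedge\Theta)$-preimages. Everything else is a routine combination of measurability of countable unions and complements and the fact that null sets do not affect subprobability measure values.
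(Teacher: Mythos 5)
Your proof is correct and follows essentially the same route as the paper's: extract the full-measure rectangle $A\times B\subseteq\Theta$ from $\mathrm{Range}(\Theta)$, shrink the left test set by intersecting with $A$, enlarge the right test set by the $\nu_2$-null complement of $B$, verify the resulting $\Phi$-inclusion, and conclude by symmetry for the opposite direction. The only differences are cosmetic (the paper enlarges by $\interpret{\Gamma}\setminus(B\cup B')$ and cancels it by intersecting with $B'$, whereas you enlarge by the complement of $B$ and cancel it by subadditivity), so there is nothing to fix.
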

\begin{proof}
	Let $(m_1,m_2) \in \Psi \wedge \Theta$, 
	$\nu_1 = \interpret{\Gamma\vdash c_1}(m_1)$, and 
	$\nu_2 = \interpret{\Gamma\vdash c_2}(m_2)$.
	Since $(\nu_1,\nu_2) \in \mathrm{Range}(\Theta)$,
	there exist $A', B' \in \Sigma_{\interpret{\Gamma}}$ such that
	$A' \times B' \subseteq \Theta$, and $\nu_1(C)=\nu_1(C\wedge A')$ and
	$\nu_2(D)=\nu_2(D\wedge B')$ for all $C, D \in \Sigma_{\interpret{\Gamma}}$.
	Suppose that $A, B \in \Sigma_{\interpret{\Gamma}}$ satisfy
	$(\Phi \wedge \Theta)(A) \subseteq B$.
	Since $A' \times B' \subseteq \Theta$, we have $(\Phi \wedge (A' \times B'))(A) \subseteq B$.
	This implies $\Phi(A \wedge A')\wedge B' \subseteq B$.
	Thus, $\Phi(A \wedge A') \subseteq B + (\interpret{\Gamma} \setminus (B \vee B'))$.
	Therefore 
	\begin{align*}
		\nu_1(A)
		&
		= \nu_1(A \wedge A')
		\leq \gamma \nu_2(B + (M \setminus (B \vee B')) + \delta
		\\
		&
		= \gamma \nu_2((B + (M \setminus (B \vee B')) \wedge B') + \delta
		\leq \gamma \nu_2(B \wedge B') + \delta
		\leq \gamma \nu_2(B) + \delta.
	\end{align*}
	Hence, $(\nu_1,\nu_2) \in \mathcal{G}(\Theta \wedge \Phi)$.
	Similarly, we obtain $(\nu_1,\nu_2) \in \opposite{(\mathcal{G}\opposite{(\Theta \wedge \Phi)})}$.
\end{proof}

\subsection{Proofs in Section \ref{sec:mechanism}}
\begin{proposition}(Proposition \ref{prop:mechanism})
Let $f \colon X \times Y \to \mathbb{R}$ be a positive measurable function, and $\nu$ be a measure over $Y$.
For all $a, a' \in X$, $\gamma, \gamma' \geq 1$, $\delta \geq 0$, and $Z\in \Sigma_Y$ (window set),
if the following three conditions hold then for any $B\in \Sigma_Y$, we have $f_a(B) \leq \gamma \gamma' f_{a'}(B) + \delta$.
\begin{enumerate}
	\item $0 < \frac{1}{\gamma'} \int_Y f(a',-)~d\nu \leq \int_Y f(a,-)~d\nu < \infty$
	\item $\forall{b \in Z}. f(a,b) \leq \gamma f(a',b)$
	\item $f_a(Y \setminus Z) \leq \delta$,
\end{enumerate}
\end{proposition}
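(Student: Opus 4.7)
The plan is to split the event $B$ along the window $Z$ as $B = (B\cap Z) \sqcup (B\setminus Z)$ and bound each piece separately, using condition (ii) on the ``good'' part $B\cap Z$ and condition (iii) on the ``bad'' part $B\setminus Z$. Since $f_a$ is a probability measure once we know $0 < \int_Y f(a,-)\,d\nu < \infty$ (the first half of condition (i)), we have the additive decomposition $f_a(B) = f_a(B\cap Z) + f_a(B\setminus Z)$, so it suffices to control each summand.

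For the good part, I would chain condition (ii) in the numerator with the lower bound from condition (i) in the denominator:
\[
f_a(B\cap Z)
= \frac{\int_{B\cap Z} f(a,-)\,d\nu}{\int_Y f(a,-)\,d\nu}
\leq \frac{\gamma \int_{B\cap Z} f(a',-)\,d\nu}{(1/\gamma')\int_Y f(a',-)\,d\nu}
= \gamma\gamma'\, f_{a'}(B\cap Z)
\leq \gamma\gamma'\, f_{a'}(B).
\]
Here the first inequality uses (ii) pointwise on $B\cap Z \subseteq Z$ together with monotonicity of the Lebesgue integral on the numerator, and the assumption $\int_Y f(a',-)\,d\nu \leq \gamma' \int_Y f(a,-)\,d\nu$ on the denominator; the positivity and finiteness of $\int_Y f(a,-)\,d\nu$ from (i) guarantee that the division is legitimate and the inequalities direction-preserving.

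For the bad part, I would simply note $B\setminus Z \subseteq Y\setminus Z$, whence by monotonicity of $f_a$ and condition (iii),
\[
f_a(B\setminus Z) \leq f_a(Y\setminus Z) \leq \delta.
\]
Adding the two bounds yields $f_a(B) \leq \gamma\gamma' f_{a'}(B) + \delta$, which is the desired inequality.

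There is no real obstacle: the three hypotheses have been engineered precisely to match the three structural parts of the argument (denominator control, numerator control on the window, and tail control off the window), so the proof is essentially a two-line calculation as hinted in the commented-out proof in the source. The only points worth double-checking in writing it up are that the strict positivity $0 < \int_Y f(a,-)\,d\nu$ in (i) is what lets us form $f_a$ as a bona fide probability measure (so additivity on the partition $\{B\cap Z, B\setminus Z\}$ applies), and that the hypotheses $\gamma,\gamma'\geq 1$ are used only implicitly to keep the ratio $\gamma\gamma'$ meaningful as a multiplicative privacy factor; the inequality itself goes through verbatim.
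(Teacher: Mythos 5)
Your proof is correct and follows essentially the same route as the paper's own (commented-out and appendix) argument: decompose $f_a(B)=f_a(B\cap Z)+f_a(B\setminus Z)$, bound the window part via condition (ii) in the numerator and the denominator estimate from (i), and bound the tail by $\delta$ via (iii).
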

\begin{proof}
	From the conditions of this proposition, we obtain for each $B \in \Sigma_Y$,
	\begin{align*}
		f_a(B)
		&=
		f_a(B\cap Z) + f_a(B \setminus Z)
		\\
		&\leq
		\frac{
			\gamma \int_{B\cap Z} f(a',-) ~d\nu
		}{
			\int_Y f(a,-) ~d\nu
		}
		+\delta
		\\
		&\leq
		\frac{
			\gamma \int_{B \cap Z} f(a',-) ~d\nu
		}{
			\frac{1}{\gamma'} \int_Y f(a',-) ~d\nu
		}
		+\delta
		\\
		&
		\leq \gamma \gamma' f_{a'}(B) + \delta.
	\end{align*}
\vspace{-\baselineskip}
\end{proof}

\begin{lemma}[Laplacian Mechanism]
If $|a - a'| < r$ then the following parameters satisfy the conditions (i)--(iii):
$\gamma = \exp(r / \sigma)$, $\gamma' = 1$, $\delta = 0$, 
the function $f \colon \mathbb{R}\times\mathbb{R}\to\mathbb{R}$ defined by $f(a,b) = \frac{2}{\sigma}\exp(\frac{-|b - a|}{\sigma})$,
the Lebesgue measure $\nu$ over $\mathbb{R}$, and the window $Z = \mathbb{R}$.
\end{lemma}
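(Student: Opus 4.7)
The plan is to verify the three conditions (i), (ii), (iii) of Proposition \ref{prop:mechanism} directly under the substitutions $X = Y = \mathbb{R}$, $\gamma = \exp(r/\sigma)$, $\gamma' = 1$, $\delta = 0$, $Z = \mathbb{R}$, and $\nu$ the Lebesgue measure. Each condition reduces to an elementary fact about the Laplace density.

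For condition (i), I observe that for any $a \in \mathbb{R}$ the integral $\int_{\mathbb{R}} f(a,b)~d\nu(b) = \int_{\mathbb{R}} \frac{2}{\sigma}\exp(-|b-a|/\sigma)~db$ is invariant under the translation $b \mapsto b - a$, so $\int_{\mathbb{R}} f(a,-)~d\nu = \int_{\mathbb{R}} f(a',-)~d\nu$, and both integrals are finite and positive (in fact a constant independent of the centre). With $\gamma' = 1$ this yields the chain of inequalities required by (i).

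For condition (ii), which is the only place where the privacy budget $r/\sigma$ arises, I compute the pointwise ratio
\[
\frac{f(a,b)}{f(a',b)} = \exp\!\Bigl(\tfrac{|b-a'| - |b-a|}{\sigma}\Bigr).
\]
The triangle inequality gives $|b-a'| - |b-a| \leq |a - a'| < r$, hence $f(a,b) \leq \exp(r/\sigma) f(a',b) = \gamma f(a',b)$ for every $b \in \mathbb{R} = Z$, as required. Condition (iii) is trivial because $Y \setminus Z = \emptyset$, so $f_a(Y \setminus Z) = 0 = \delta$.

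There is no real obstacle; the only subtle point is condition (ii), which is essentially the classical sensitivity argument for the Laplace mechanism, and the budget $\gamma\gamma' = \exp(r/\sigma)$ delivered by Proposition \ref{prop:mechanism} then matches the privacy parameter claimed for $\mathtt{Lap}_\sigma$ in the [Lap] rule.
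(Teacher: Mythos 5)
Your proof is correct and follows essentially the same route as the paper: the triangle inequality for condition (ii), emptiness of $Y \setminus Z$ for (iii), and equality of the two normalizing integrals for (i). Your translation-invariance argument for (i) is in fact slightly more careful than the paper's, which asserts the integral equals $1$ even though the stated coefficient $\frac{2}{\sigma}$ makes it a different (finite, positive) constant — a harmless slip either way, since condition (i) only needs positivity, finiteness, and the inequality with $\gamma'=1$.
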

\begin{proof}
The conditions (i) is satisfied,
because the function $f(a,-)$ is the density function of Lapcacian distribution, and hence $\int_{\mathbb{R}} f(a,-) d\nu = \int_{\mathbb{R}} f(a',-) d\nu = 1$.

The condition (iii) is automatically satisfied since $\mathbb{R}\setminus Z = \emptyset$.

We now check that the condition (ii) is satisfied.
The triangle inequality $|b - a'| \leq |a - a'|+|b - a|$ and the assumption $|a - a'| < r$ imply:
\[
\frac{f(a,b)}{f(a',b)}
= \exp\left(\frac{|b - a'|-|b - a|}{\sigma}\right)
 \leq \exp\left(\frac{|a - a'|}{\sigma}\right)
 \leq \exp\left(\frac{r}{\sigma}\right)
\]
This implies $f(a,b) \leq \exp(r / \sigma)f(a',b)$.
\end{proof}

\begin{lemma}[Exponential Mechanism]
Let $D$ be the discrete Euclidian space $\mathbb{Z}^n$, and $(R,\nu)$ be a (positive) measure space.
Let $q \colon D \times R \to \mathbb{R}$ be a measurable function such that $\sup_{b \in R} |q(a,b)-q(a',b)| \leq  c \cdot ||a - a'||_1$ for some $c > 0$.
Suppose $0 < \int_{R} \exp(\varepsilon q(a,-))~d\nu < \infty$ for any $a \in D$.

Suppose $||a - a'||_1 < r$.
The following parameters then satisfy the conditions (i)--(iii):
$\gamma = \gamma' = \exp(\varepsilon r c)$, $\delta = 0$,
the function $f \colon D\times R\to\mathbb{R}$ defined by $f(a,b) = \exp(\varepsilon q(a,b))$ with fixed $\varepsilon>0$,
the given measure $\nu$,
and the window $Z = R$.
\end{lemma}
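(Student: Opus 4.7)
The plan is to verify the three conditions (i)--(iii) of Proposition \ref{prop:mechanism} one at a time, in reverse order of difficulty.

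First I would dispense with condition (iii): since the window is taken to be the whole space $Z = R$, the complement $R \setminus Z$ is empty, so $f_a(R \setminus Z) = 0 \leq 0 = \delta$ trivially.

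Next I would establish condition (ii), which amounts to a pointwise bound on $f(a,b)/f(a',b)$. By hypothesis, $\sup_{b \in R} |q(a,b) - q(a',b)| \leq c \cdot \|a - a'\|_1 < cr$. Hence for every $b \in R$,
\[
\frac{f(a,b)}{f(a',b)} = \exp\bigl(\varepsilon (q(a,b) - q(a',b))\bigr) \leq \exp(\varepsilon r c) = \gamma,
\]
which gives $f(a,b) \leq \gamma f(a',b)$ as required. Note that the same bound holds with $a$ and $a'$ swapped, a fact I will reuse immediately.

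Finally I would verify condition (i). The strict positivity and finiteness $0 < \int_R f(a',-) \, d\nu < \infty$ and $\int_R f(a,-) \, d\nu < \infty$ are given directly by the standing hypothesis on $q$. It remains to check the middle inequality $\frac{1}{\gamma'} \int_R f(a',-) \, d\nu \leq \int_R f(a,-) \, d\nu$. Using the symmetric version of (ii) established above, $f(a',b) \leq \gamma f(a,b)$ pointwise, so integrating against $\nu$ yields $\int_R f(a',-) \, d\nu \leq \gamma \int_R f(a,-) \, d\nu$. Since $\gamma' = \gamma$, dividing gives exactly the required inequality. There is no genuine obstacle here; the only subtlety is to note that the Lipschitz-type hypothesis on $q$ is symmetric in its arguments, which is what makes the same $\gamma$ serve as both $\gamma$ and $\gamma'$ in Proposition \ref{prop:mechanism}.
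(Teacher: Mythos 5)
Your proof is correct and follows essentially the same route as the paper: condition (iii) is trivial since $Z = R$, and the pointwise bound $f(a,b)/f(a',b) = \exp(\varepsilon(q(a,b)-q(a',b))) \leq \exp(\varepsilon c r)$ from the Lipschitz-type hypothesis yields both (ii) and, via its symmetric form, (i). You are merely a bit more explicit than the paper in spelling out how the symmetric bound integrates to give the middle inequality in (i), which the paper leaves implicit.
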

\begin{proof}
The condition (iii) is obviouslly satisfied.

The condtions (i) and (ii) is obtained from the following calculation:
whenever $||a - a'||_1 < r$, we obtain
\begin{align*}
\frac{f(a,b)}{f(a',b)}
&= \exp\left(\varepsilon q(a,b) - \varepsilon q(a',b) \right)
\leq \exp\left(\varepsilon |q(a,b) - q(a',b)| \right)\\
&\leq \exp\left(\varepsilon c ||a - a'||_1 \right)
\leq \exp\left(\varepsilon c r \right)
\end{align*}
\end{proof}

\begin{lemma}[Gaussian Mechanism: Relaxed Result of {\cite[Theorem A.1]{DworkRothTCS-042}}]
If $|a - a'| < r$, $1 < \gamma < \exp(1)$, and $\gamma'  = 1$ hold, 
and $c = \frac{\sigma\log\gamma}{r}$ satisfies $((1+\sqrt{3})/2) < c$ and $2\log(0.66/\delta) < c^2$,
then the parameters $\gamma$, $\gamma'$, and $\delta$,
the function $f \colon \mathbb{R}\times\mathbb{R}\to\mathbb{R}$ defined by $f(a,b) = \frac{1}{\sqrt{2\pi\sigma^2}} \exp(-\frac{(b - a)^2}{2\sigma^2})$, and
the Lebesgue measure $\nu$ over $\mathbb{R}$
satisfy the conditions (i)--(iii) of Proposition \ref{prop:mechanism}
for the window set $Z$ given by
\[
Z =
\begin{cases} 
\SetBraket{b | b \leq (a+a')/2 +  (\sigma^2\log\gamma /r)},& \text{ if } a \leq a'\\
\SetBraket{b | b \geq (a+a')/2 -  (\sigma^2\log\gamma /r)},& \text{ if } a' \leq a.
\end{cases}
\]
\end{lemma}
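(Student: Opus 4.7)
My plan is to verify each of the three conditions of Proposition \ref{prop:mechanism} in turn. Condition (i) is immediate: the function $f(a, -)$ is a Gaussian probability density for every $a$, so $\int_{\mathbb{R}} f(a, -) \, d\nu = \int_{\mathbb{R}} f(a', -) \, d\nu = 1$, and with $\gamma' = 1$ condition (i) reduces to $0 < 1 \leq 1 < \infty$.

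Condition (ii) is a direct algebraic manipulation. The Gaussian formula yields
\[
\frac{f(a, b)}{f(a', b)} = \exp\left(\frac{(b - a')^2 - (b - a)^2}{2\sigma^2}\right) = \exp\left(\frac{(a - a')(2b - a - a')}{2\sigma^2}\right),
\]
so I need to show $(a - a')(2b - a - a') \leq 2\sigma^2 \log \gamma$ for all $b \in Z$. The window $Z$ is an affine half-line chosen precisely so that, combined with the hypothesis $|a - a'| < r$, a short case analysis on the sign of $a - a'$ yields this bound: when the two factors of the product have opposite signs the inequality is trivial, and otherwise a direct product estimate using the one-sided constraint on $b$ from $Z$ and $|a-a'| < r$ gives the required bound $\leq 2\sigma^2\log\gamma = 2\sigma rc$.

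Condition (iii) is the technical core and the main obstacle. Changing variables $z = (b - a)/\sigma$ converts $f_a(Y \setminus Z)$ into a standard normal tail of the form $\Pr[z < (a'-a)/(2\sigma) - c]$. Using $|a - a'| < r$ together with the identity $r/(2\sigma) = \log\gamma/(2c)$ coming from $c = \sigma \log\gamma/r$, this tail is majorised by $\Pr[z < -s]$ where $s = c - \log\gamma/(2c)$. The classical Mills bound $\Pr[z < -s] \leq e^{-s^2/2}/(s\sqrt{2\pi})$ then reduces condition (iii) to an inequality comparing $s^2$ with $2\log$ of a constant times $1/\delta$. The bound $\gamma < e$ gives $\log\gamma < 1$, so the slack $c^2 - s^2 = \log\gamma - (\log\gamma)^2/(4c^2)$ is at most $1$, allowing the hypothesis $c^2 > 2\log(0.66/\delta)$ to be transferred to $s^2$ up to a manageable additive correction. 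The threshold $c > (1+\sqrt{3})/2$ plays two roles: it ensures $s > 0$ so that Mills applies, and it forces $s\sqrt{2\pi}$ to be large enough (at least $1/0.66$) that the factor $1/(s\sqrt{2\pi})$ in the Mills denominator can be absorbed into the relaxed constant $0.66$, replacing the classical constant $1.25 \approx \sqrt{2\pi}/2$ that would arise in the symmetric-window analysis. Keeping track of all three constants $(\gamma < e$, $c > (1+\sqrt{3})/2$, $c^2 > 2\log(0.66/\delta))$ simultaneously is the delicate bookkeeping that makes this step nontrivial.
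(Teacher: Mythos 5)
Your proposal is correct and follows essentially the same route as the paper's proof: (i) from normalisation of the Gaussian density, (ii) by the same completing-the-square computation against the one-sided window with $|a-a'|<r$, and (iii) by reducing the tail beyond the window to a standard normal tail at $s = c - \log\gamma/(2c)$, applying the Mills-type bound $\frac{1}{s\sqrt{2\pi}}e^{-s^2/2}$, and absorbing the slack $c^2 - s^2 \le \log\gamma < 1$ into the constant $0.66 \ge \sqrt{e/(2\pi)}$. The only quibble is your claim that $s\sqrt{2\pi} \ge 1/0.66$ suffices for the absorption; the requirement is really $s \ge \sqrt{e/(2\pi)}/0.66$, i.e.\ essentially $s>1$, which is precisely what $c > (1+\sqrt{3})/2$ delivers, so the bookkeeping closes exactly as in the paper.
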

\begin{proof}
We assume $a' \leq a$ because in the case $a' > a$, we can prove in the similar way as $a' \leq a$.

The conditions (i) is satisfied, because
for each $a \in \mathbb{R}$ the function $f(a,-)$ is the density function of Gaussian distribution, and hence $\int_{\mathbb{R}} f(a,-) d\nu = \int_{\mathbb{R}} f(a',-) d\nu = 1$.

We prove that the given parameters satisafy the condition (ii) of Proposition \ref{prop:mechanism}.
Suppose $Z = \SetBraket{b | b \leq (a+a')/2 +  (\sigma^2\log\gamma /r)}$.
Take an arbitrary $b \in Z$.
We then calculate as follows:
\begin{align*}
\frac{f(a,b)}{f(a',b)}
&= \exp\left(\frac{(b - a')^2 - (b - a)^2}{2\sigma^2}\right)\\
&= \exp\left(\frac{1}{\sigma^2}(a-a')(b - \frac{a+a'}{2})\right)\\
&\leq \exp\left(\frac{r}{\sigma^2}(b - \frac{a+a'}{2})\right)\\
&\leq \exp\left(\frac{r}{\sigma^2}\frac{\sigma^2\log\gamma}{r}\right) \leq \gamma
\end{align*}
This implies $\forall{b \in Z}.f(a,b) \leq \gamma f(a',b)$.

We prove that given parameters satisfy the condition (iii).
Let $H = \frac{a+a'}{2} + \frac{\sigma^2\log\gamma}{r}$, and let
$H' = \frac{a'-a}{2\sigma} + \frac{\sigma\log\gamma}{r}$.

Since $c > ((1+\sqrt{3})/2)$, we have $c- \frac{1}{2c} -1 > 0$.
From $\log\gamma < 1$, we obtain  $c- \frac{\log\gamma}{2c} -1 > 0$
Since $-r < a' - a $, we obtain $H' > 1$, and hence $\log(H') > 0$.

Since $c^2 > 2\log(0.66/\delta)$, we have $c^2 > 2\log(\frac{1}{\delta}\sqrt{\frac{\exp(1)}{2\pi}})$.
This implies $c^2 - 1 >  2\log(\frac{1}{\delta\sqrt{2\pi}})$.
Since $H' > c - \frac{\log\gamma}{2c} > c -\frac{1}{2c}$,
we then obtain $H'^2 > c^2 - 1 > 2\log(\frac{1}{\delta\sqrt{2\pi}})$.
Therefore, we conclude $\log(H') + H'^2/2 > \log(\frac{1}{\delta\sqrt{2\pi}})$.

We then obtain:
\begin{align*}
\lefteqn{\int_{\mathbb{R}\setminus Z}\frac{1}{\sigma\sqrt{2\pi}}\exp\left(-\frac{(x-a)^2}{2\sigma^2}\right)~d\nu}\\
&= \frac{1}{\sigma\sqrt{2\pi}}\int^{\infty}_{H}\exp\left(-\frac{(x-a)^2}{2\sigma^2}\right)~dx\\
&= \frac{1}{\sqrt{2\pi}}\int^{\infty}_{H'}\exp\left(-\frac{b^2}{2}\right)~db\\
&\leq \frac{1}{\sqrt{2\pi}}\int^{\infty}_{H'} \frac{b}{H'}\exp\left(-\frac{b^2}{2}\right)~db\\
&\leq \frac{1}{\sqrt{2\pi}H'} \exp\left(-\frac{H'^2}{2}\right) \leq \delta.
\end{align*}
This implies $f_a(\mathbb{R}\setminus Z) \leq \delta$.
\end{proof}

\subsection{Proofs in Section \ref{sec:example}}
\begin{lemma}(Lemma \ref{lem:forallEq})
If $x \colon \tau$ and the space $\interpret{\tau}$ is countable discrete then
\[
{\bigcap_{i \in \interpret{\tau}}\mathcal{G}^{(\gamma,\delta_i)} (x\langle 1\rangle = i \Rightarrow x\langle 2\rangle = i)}\subseteq{\mathcal{G}^{(\gamma,\sum_{i \in \interpret{\tau}}\delta_i)} (x\langle 1\rangle = x\langle 2\rangle)}.
\]
\end{lemma}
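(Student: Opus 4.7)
The plan is to reduce the diagonal relation $(x\langle 1\rangle = x\langle 2\rangle)$ to a countable disjoint family of implicational relations indexed by $i \in \interpret{\tau}$, apply the hypothesis on each slice separately, and recombine the bounds via the countable additivity of the subprobability measures $\nu_1$ and $\nu_2$.

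First I would fix $(\nu_1,\nu_2)$ in the left-hand intersection and take arbitrary measurable $A, B \subseteq \interpret{\Gamma}$ with $(x\langle 1\rangle = x\langle 2\rangle)(A) \subseteq B$; the goal becomes $\nu_1(A) \leq \gamma \nu_2(B) + \sum_{i \in \interpret{\tau}} \delta_i$. Since $\interpret{\tau}$ is countable discrete, every singleton $\{i\}$ is measurable and the projection $\pi_x \colon \interpret{\Gamma} \to \interpret{\tau}$ is a measurable function, so $\{\pi_x^{-1}(\{i\})\}_{i\in\interpret{\tau}}$ is a countable measurable partition of $\interpret{\Gamma}$. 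I would then set $A_i = A \cap \pi_x^{-1}(\{i\})$ and $B_i = B \cap \pi_x^{-1}(\{i\})$, which are measurable and yield disjoint countable decompositions $A = \bigsqcup_i A_i$ and $B = \bigsqcup_i B_i$.

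The key step is the slicewise inclusion $(x\langle 1\rangle = i \Rightarrow x\langle 2\rangle = i)(A_i) \subseteq B_i$. Unfolding the direct image of the implicational relation on the subset $A_i \subseteq \pi_x^{-1}(\{i\})$, one sees this image is empty when $A_i$ is empty and equals $\pi_x^{-1}(\{i\})$ otherwise. In the nontrivial case the hypothesis $(x\langle 1\rangle = x\langle 2\rangle)(A) \subseteq B$ together with $A_i \neq \emptyset$ forces $\pi_x^{-1}(\{i\}) \subseteq B$, whence $\pi_x^{-1}(\{i\}) = B_i$ and the inclusion holds.

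Applying the assumption $(\nu_1,\nu_2) \in \mathcal{G}^{(\gamma,\delta_i)}(x\langle 1\rangle = i \Rightarrow x\langle 2\rangle = i)$ to $A_i$ and $B_i$ then yields $\nu_1(A_i) \leq \gamma \nu_2(B_i) + \delta_i$ for each $i$. Summing over the countable index set and invoking the countable additivity of $\nu_1$ and $\nu_2$ on the disjoint families $\{A_i\}$ and $\{B_i\}$ gives $\nu_1(A) \leq \gamma \nu_2(B) + \sum_i \delta_i$, which closes the argument; when $\sum_i \delta_i = \infty$ the conclusion is vacuous. I do not expect serious obstacles: the only subtlety is recognising that the partition must be driven by the projection $\pi_x$, and that the implicational form of the precondition is exactly what confines each direct image to a single $x$-fibre.
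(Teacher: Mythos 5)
Your proof is correct and follows essentially the same route as the paper: decompose along the countable fibres of the variable $x$, observe that on each nonempty slice the image of the implicational relation is the whole fibre, apply the hypothesis slice-wise, and conclude by countable additivity. The only cosmetic difference is that you also slice the target set $B$ into the fibres $B_i$ and sum $\nu_2(B_i)$ directly, whereas the paper bounds $\nu_1(A)$ by $\gamma\,\nu_2$ of the (measurable) image $(x\langle 1\rangle = x\langle 2\rangle)(A)$ plus $\sum_i \delta_i$ and then uses monotonicity; both handle the arbitrary measurable superset $B$ equally well.
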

\begin{proof}
Let $\interpret{\Gamma, x \colon \tau} = \interpret{\tau} \times \interpret{\Gamma}$.
Suppose $(\nu_1,\nu_2) \in \bigcap_{i \in \interpret{\tau}}\mathcal{G}^{(\gamma,\delta_i)} (x\langle 1\rangle = i \Rightarrow x\langle 2\rangle = i)$.
Take an arbitrary $A \in \Sigma_\interpret{\Gamma, x \colon \tau}$.
Since $\interpret{\tau}$ is countable discrete, we decompose $A = \sum_{i \in \interpret{\tau}} (\{i\}\times A_i)$. 
We may assume $A_i \neq \emptyset$ because $\{i\}\times \emptyset = \emptyset$.
Since $ (x\langle 1\rangle = i \Rightarrow x\langle 2\rangle = i)(\{i\}\times A_i) = \{i\}\times \interpret{\Gamma}$,
we obtain $\nu_1(\{i\}\times A_i) \leq \gamma \nu_2(\{i\} \times \interpret{\Gamma}) + \delta_i$ for each $i \in \interpret{\tau}$.
By summing them up, we obtain 
$
\nu_1(A)
\leq
\gamma \nu_2((x\langle 1\rangle = x\langle 2\rangle)(A))+ \sum_{i \in \interpret{\tau}} \delta_i
$.
\end{proof}
\end{document}